\newcommand{\R}{\mathbb{R}}
\newcommand{\Z}{\mathbb{Z}}
\newcommand{\N}{\mathbb{N}}
\DeclareMathOperator{\Card}{Card}
\def\u(#1){\underline{#1\!}\,}
\def\1{\mathbf{1}}
\newcommand{\edge}[1]{\stackrel{#1}{\rightarrow}}
\newcommand{\longedge}[1]{\stackrel{#1}{\longrightarrow}}
\newcommand{\A}{\mathcal{A}}
\newcommand{\B}{\mathcal{B}}
\newcommand{\Tr}{\mathcal{T}}
\newcommand{\GD}{\mathcal D}
\newcommand{\GH}{\mathcal H}
\newcommand{\GR}{\mathcal R}
\newcommand{\GL}{\mathcal L}
\newcommand{\Langle}{\langle\!\langle}
\newcommand{\Rangle}{\rangle\!\rangle}
\newtheorem{theorem}{Theorem}
\newtheorem{proposition}[theorem]{Proposition}
  \newtheorem{example}[theorem]{Example}
\newenvironment{proof}{\noindent\textit{Proof.}}
{\QED\vskip\theorempostskipamount} 
\def\petitcarre{\vrule height4pt width 4pt depth0pt}
\def\QED{\relax\ifmmode\eqno{\hbox{\petitcarre}}\else{%
  \unskip\nobreak\hfil\penalty50\hskip2em\hbox{}\nobreak\hfil
  \petitcarre
  \parfillskip=0pt \finalhyphendemerits=0\par\smallskip}
  \fi}
\title{The Degree of a Finite Set of Words}
\author{Dominique Perrin,
Andrew Ryzhikov}
\begin{document}

\maketitle

\begin{abstract}
We generalize the notions of the degree and composition from uniquely decipherable codes to arbitrary finite sets of words.
We prove that if $X=Y\circ Z$ is a composition of finite sets of words with $Y$ complete, then $d(X)\le d(Y) \cdot d(Z)$, where $d(T)$ is the degree of $T$. We also show that a finite set is synchronizing if and only if its degree equals one. 

This is done by considering, for an arbitrary finite set $X$ of words, the transition monoid of an automaton recognizing $X^*$ with multiplicities. We prove a number of results for such monoids, which generalize corresponding results for unambiguous monoids of relations.
\end{abstract}

\section{Introduction}

Let $X$ be a set of finite words. The set $X^*$ of all concatenations of words in $X$ (often called the Kleene star of $X$) plays an important role in formal languages theory and its applications. The set $X$ often represents a dictionary or a code transmitted over a channel, so the case where $X$ is finite is especially important. In general, a word in $X^*$ can have several different factorizations over $X$, and it is useful to understand the relations between them. A word $w$ is called \emph{synchronizing} for $X$ if for any words $u, v$ such that $uwv \in X^*$ we have $uw, wv \in X^*$. In particular, we get that any word in $X^*$ containing $ww$ as a factor, that is, any word of the form $uwwv$, has a factorization where $uw$ and $wv$ are both in $X^*$, and thus can be factorized separately. 
A set which admits a synchronizing word is also called \emph{synchronizing}. A set $X$ is called \emph{complete} if every word over the same alphabet occurs as a factor of a word in $X^*$.

Synchronizing words are studied a lot for uniquely decipherable codes (see e.g., Chapter 10 of \cite{BerstelPerrinReutenauer2009}). A set $X$ of words is called a \emph{uniquely decipherable code} (often also called a \emph{variable length code}) if every word has at most one factorization over $X$. Such codes play a crucial role in the theory of data compression and transmission \cite{BerstelPerrinReutenauer2009}.

Provided a set $Z$ of words such that $X \subset Z^*$, one can rewrite $X$ using $Z$ as the alphabet, thus resulting in a new set~$Y$. The representation $X = Y \circ Z$ is then called a decomposition of $X$, and the converse process of obtaining $X$ is called composition. Decomposition of a set allows to represent it by using simpler sets as building blocks, while preserving many properties of the initial one. Conversely, compositions of codes allow to construct more complicated codes by using simple ones, so they are interesting on their own. In particular, the composition of two uniquely decipherable codes is again a uniquely decipherable code~\cite{BerstelPerrinReutenauer2009}. For any injective morphism $\alpha: A^* \to B^*$, $\alpha(A)$ is a code, and each code can be obtained as the image of $A$ for some $A$ and $\alpha$ \cite{BerstelPerrinReutenauer2009}. Compositions of codes are then nothing more than compositions of injective morphisms between free monoids. The notion of composition of two arbitrary finite sets of words is also natural as it corresponds to the composition of arbitrary morphisms.

\subparagraph*{Our contributions.} In this paper, we transfer the notions of composition, degree and group from uniquely decipherable codes to arbitrary finite sets of words. This extends the presentation of~\cite{BerstelPerrinReutenauer2009} made for uniquely decipherable codes.

Provided a finite set $X$ of words, we associate a special automaton $\mathcal{A}$ (called the flower automaton) recognizing $X^*$ with multiplicities. Let $S$ be the set of fixed points of an idempotent $e$ of minimum rank in the transition relation of $\mathcal{A}$, and $\Gamma$ be the set of strongly connected components of $S$. We consider a permutation group $G_e$ acting on $\Gamma$. We show that all such groups are equivalent for idempotents of minimum rank (Theorem \ref{theorem9.3.10}). Moreover, we show that for a given $X$ all these groups are equivalent for any trim automaton recognizing $X^*$ with multiplicities (Proposition \ref{proposition9.5.1}). Thus this group is an invariant of a set. We introduce the degree $d(X)$ of $X$, which is the minimum rank of elements in the transition monoid of $\mathcal{A}$. We then show that synchronizing sets are exactly sets of degree one (Proposition \ref{prop-degree-one}). As our main contribution, we use the obtained results to show that for a composition $X = Y \circ Z$ of two finite sets $Y, Z$ such that $Y$ is complete we have $d(X) \le d(Y) \cdot d(Z)$ (Theorem \ref{theoremDegrees}). 

For a finite set $X$, all these results were previously known only for the special case of $X$ being a uniquely decipherable code
with the equality $d(X)=d(Y)d(Z)$
instead of an inequality \cite{BerstelPerrinReutenauer2009}. Our generalization to the case of an arbitrary finite set requires more complicated proofs. In particular, for uniquely decipherable codes it is enough to consider a trim unambiguous automaton recognizing $X^*$ (which is a cornerstone of the theory), while in our case we need a trim automaton recognizing $X^*$ with multiplicities. Intuitively, such automata count the number of factorizations over $X$, and thus they are unambiguous when $X$ is a uniquely decipherable code.
The technical difficulties then begin with the replacement of unambiguous
monoids of relations by arbitrary monoids of relations.
Indeed, the multiplication of matrices there is different
from the result over the Boolean semiring. In particular,
the representation
of maximal subgroups by permutations is still possible but more complicated.

\subparagraph*{Motivation and related results.} Larger classes of codes are considered both in theory and in practice. Particular examples include multiset and set decipherable codes. A set $X$ of words is called a \emph{multiset} \cite{Guzman1999} (respectively, \emph{set} \cite{Lempel1986}) \emph{decipherable code} if every factorization of a word into codewords provides the same multiset (respectively, set) of codewords. Such codes are used if one needs to transmit only the frequencies (or the fact of occurrences) of elements, but the order of these elements does not matter. Lempel \cite{Lempel1986} reports online compilations of inventories, construction of histograms, or updating of relative frequencies as particular examples. An important property of multiset decipherable codes is that there exist examples of such codes with Kraft-McMillan sum more than one, which shows that such codes can be more efficient than uniquely decipherable codes \cite{Restivo1989}. An even wider class is that of numerically decipherable codes, which are sets with the property that every factorization of a word over such set has the same number of codewords \cite{Weber1996}. A similar setting of multivalued encodings allows to have several different codewords for the same symbol \cite{Capocelli1994}. In view of that, the transit of results from uniquely decipherable codes to arbitrary sets is interesting.

Another motivation for studying factorizations of words in $X^*$ for an arbitrary finite set~$X$ is the area of static dictionary compression, where one looks for some specific factorization of a text over some finite dictionary \cite{Bell1990}. The dictionary does not have to be a uniquely decipherable code, thus a text can have several different factorizations. In this case, it is useful to know the relation between different factorizations. The parallel version of this problem is also considered \cite{Nagumo1995}. In \cite{Clement2005} a fast algorithm for checking if a given word $w$ belongs to $X^*$ is suggested. If the answer is positive, it also provides a factorization of $w$ over $X$.

Only few results are known about decompositions and synchronization of arbitrary sets of words. The defect theorem states that every finite set of words which is not a uniquely decipherable code can be decomposed over a set of smaller size \cite{Berstel1979Defaut}. A survey of different generalizations of this theorem is presented in \cite{Harju2004}. Synchronization in arbitrary monoids was studied in \cite{Carpi2017} and \cite{Deluca1979}. Other properties of factorizations are studied in \cite{Rest1983, Schutzenberger1979}. 

\subparagraph*{Organization of the paper.} 
To transfer the results from uniquely decipherable code to arbitrary finite sets of words, we first set a correspondence with an adequate
class of automata, namely automata recognizing with multiplicities (Sections \ref{sect-automata} and \ref{sect-tranducers}).
Then we introduce the notion of a composition for arbitrary
finite sets of words (Section \ref{sect-composition}).
We extend the theory of unambiguous
monoids of relations by the theory of arbitrary monoids of relations (Section \ref{sect-relations}), and generalize the notion of the group $G(X)$ and the degree $d(X)$
of a finite set $X$
of words (Section \ref{sect-group}). In this way, as for codes,
a set is synchronizing if and only if it is of degree $1$ (Section \ref{sect-synchronization}).
As the main result, we prove that if $X=Y\circ Z$ with $Y$ complete, then
$d(X)\le d(Y) \cdot d(Z)$
(Section \ref{sect-groups-composition}). In Section \ref{sect-codes} we show that if we require $Y$ to be complete, we do not get any new decompositions of a uniquely decipherable code other than into two uniquely decipherable codes.

\paragraph{Acknowledgements}
A preliminary version of this paper
appeared in~\cite{PerrinRyzhikov2020}.
We thank Jean-Eric Pin and Jacques Sakarovitch for references concerning the composition of automata and transducers.

\section{Automata} \label{sect-automata}
We denote by $A^*$ the free monoid on a finite alphabet $A$,
by $1$ the empty word, and by $A^+$ the set $A^* \setminus \{1\}$. For notions not defined in this section see \cite{BerstelPerrinReutenauer2009}.

Let $\A=(Q,i,t)$ be an automaton on the alphabet $A$ with $Q$ as set
of states, $i$ as initial state and $t$ as terminal state
(we will not need to have several initial or terminal states).
We do not specify in the notation the set of edges, which are
triples $(p,a,q)$ with two states $p,q\in Q$ and a label
$a\in A$ denoted $p\edge{a}q$. We form paths as usual
by concatenating consecutive edges. An automaton is called \emph{trim} if there exists a path from $i$ to every state, and from every state to $t$.

The \emph{language recognized} by $\A$, denoted $L(\A)$,
is the set of words in $A^*$
which are labels 
of paths from $i$ to $t$. There can be several paths from $i$ to $t$
for a given label, and this motivates the introduction of multiplicities.

For a semiring $K$, a $K$-subset of $A^*$ is
a map from $A^*$ into $K$. The value of a $K$-subset~$X$ at $w$ is called
its \emph{multiplicity} and denoted $(X, w)$.
We denote by $K\Langle A\Rangle$ the semiring
of $K$-subsets of $A^*$ and by $K\langle A\rangle$
the set of corresponding polynomials, that is
the $K$-subsets with a finite number of words
with nonzero multiplicity (on these notions, see~\cite{Eilenberg1974}).

If $X,Y$ are $K$-subsets, then $X+Y$ and $XY$ are the $K$-subsets defined by 
\begin{displaymath}
(X+Y,w)=(X,w)+(Y,w), \quad (XY,w)=\sum_{w=uv}(X,u)(Y,v).
\end{displaymath}
Moreover, if $X$ does not have a constant term, that is, if $(X,1)=0$, then $X^*$ is the $K$-subset
\begin{displaymath}
X^*=1+X+X^2+\ldots
\end{displaymath}
Since $X$ has no constant term, for every word $w$, the number of nonzero
terms $(X^n,w)$ in the sum above is finite and thus $X^*$ is well-defined.

For a set $X\subset A^*$, we denote by $\u(X)$ the characteristic series
of $X$, considered as an $\N$-subset. It is easy to
verify that for $X\subset A^+$, the mutiplicity
of $w\in A^*$ in $\u(X)^*$ is the number of factorizations
of $w$ in words of $X$.

For an automaton $\A=(Q,i,t)$ on the alphabet $A$,
we denote by $|\A|$ its behaviour,
which is an element of $\N\Langle A\Rangle$.
It is the $\N$-subset of $A^*$ such that the multiplicity
of $w\in A^*$ in $|\A|$ is the number of paths from $i$ to $t$
labeled $w$ in $\A$.

We denote by $\mu_\A$ the morphism from $A^*$ into the
monoid of $Q\times Q$-matrices with integer coefficients
defined for $\mu_\A(w)_{p,q}$ as the number of paths from $p$ to $q$ labeled by $w$.
Thus, the multiplicity of $w$ in $|\A|$ is $(|\A|,w)=\mu_\A(w)_{i,t}$.

Given a set $X\subset A^+$, we say that the automaton $\A$ \emph{recognizes}
$X^*$ \emph{with multiplicities} if the behaviour
of $\A$ is the multiset assigning to $x$ its
number of distinct factorizations in $X$. Formally, $\A$
recognizes $X^*$ with multiplicities if $|\A|=\u(X)^*$.

\begin{example}\label{example{a,a^2}}
Let $X=\{a,a^2\}$. The number of factorizations of $a^n$ in words of $X$
is the Fibonacci number $F_{n+1}$ defined by $F_0=0$, $F_1=1$
and $F_{n+1}=F_n+F_{n-1}$ for $n\ge 1$.
The automaton $\A$ represented in Figure~\ref{figureFibo}
recognizes $X^*$ with multiplicities, that is $|\A|=(a+a^2)^*$.
\begin{figure}[hbt]
  \centering
%    \tikzset{node/.style={circle,draw,minimum size=0.4cm,inner sep=0.4pt}}
%  \tikzset{box/.style={draw,minimum size=0.4cm,inner sep=0pt}}
%  \tikzstyle{loop right}=[in=-40,out=40,loop]
%  \tikzstyle{loop left}=[in=140,out=120,loop]
%\begin{tikzpicture}(20,10)(0,-5)
%  \node[node](1)at(0,0){$1$};
%  \node[node](2)at(2,0){$2$};

%\draw[loop left,->](1)edge node{$a$}(1);
%\draw[bend left, ->](1)edge node{$a$}(2);
%\draw[bend left,->](2)edge node{$a$}(1);
  %\end{tikzpicture}
  \includegraphics{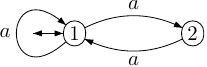}
\caption{An automaton recognizing $X^*$ with multiplicities.}
\label{figureFibo}
\end{figure}

We have indeed for every $n\ge 1$,
\begin{displaymath}
\mu_\A(a^n)=\begin{bmatrix}F_{n+1}&F_n\\F_n&F_{n-1}\end{bmatrix}
\end{displaymath}
\end{example}
For an automaton $\A=(Q,i,t)$ on the alphabet $A$,
we denote by $\varphi_\A$ the morphism from $A^*$ onto
the monoid of transitions of $\A$. Thus, for $w\in A^*$,
$\varphi_\A(w)$ is the Boolean $Q\times Q$-matrix defined by
\begin{displaymath}
\varphi_\A(w)_{p,q}=\begin{cases}1&\mbox{ if $p\edge{w}q$},\\0&\mbox{otherwise}
\end{cases}
\end{displaymath}
Let $X\subset A^+$ be a finite set of words on the alphabet $A$.
The \emph{flower automaton} of $X$ is the following automaton.
Its 
set of states is the subset $Q$ of $A^*\times A^*$ defined as
\begin{displaymath}
Q=\{(u,v)\in A^+\times A^+\mid uv\in X\}\cup (1,1).
\end{displaymath}
We often denote $\omega=(1,1)$.
There are four type of edges labeled by $a\in A$
\begin{eqnarray*}
(u,av)&\longedge{a}&(ua,v)\quad\text{for $uav\in X$, $u,v\ne1$}\\
\omega&\longedge{a}&(a,v)\quad\text{for $av\in X$, $v\ne 1$}\\
(u,a)&\longedge{a}&\omega\quad\text{for $ua\in X$, $u\ne 1$}\\
\omega&\longedge{a}&\omega\quad\text{for $a\in X$.}
\end{eqnarray*}
The state $\omega$ is both initial and terminal.

The proof of the following result is straightforward.
It generalizes the fact that the flower automaton
of a code recognizes $X^*$ and is unambiguous (see Theorem 4.2.2 in \cite{BerstelPerrinReutenauer2009}).

\begin{proposition}
For any finite set $X\subset A^+$,
the flower automaton of $X$ recognizes $X^*$ with multiplicities.
\end{proposition}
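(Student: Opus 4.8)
The plan is to set up an explicit bijection between paths from $\omega$ to $\omega$ in the flower automaton and factorizations of the label into words of $X$, so that the path-counting behaviour $|\A|$ agrees termwise with $\u(X)^*$; since $(\u(X)^*,w)$ is exactly the number of factorizations of $w$ in $X$, this gives the claim.

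First I would analyze the internal states, those of the form $(u,v)$ with $u,v\ne 1$ and $uv\in X$. Because $X$ is a set, the product $uv$ recovers a unique $x\in X$, so each internal state belongs to exactly one word of $X$. Reading off the edge definitions, from an internal state $(u,w)$ with $w=av$ the only outgoing edge reads the first letter $a$ of the second component: it goes to $(ua,v)$ when $v\ne 1$ (edge of the first type) and to $\omega$ when $v=1$ (edge of the third type, which exists since $ua=uv\in X$). Thus every internal state has out-degree one, and the internal states belonging to a fixed $x=a_1\cdots a_n$ form a single chain $(a_1,a_2\cdots a_n)\to(a_1a_2,a_3\cdots a_n)\to\cdots\to(a_1\cdots a_{n-1},a_n)$; moreover this chain can be entered only from $\omega$, through $\omega\edge{a_1}(a_1,a_2\cdots a_n)$.

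Next I would show that the first-return paths from $\omega$ to $\omega$, those meeting $\omega$ only at their endpoints, are in bijection with $X$. Any first-return path of length one uses the loop $\omega\edge{a}\omega$ with $a\in X$. A first-return path of length at least two begins $\omega\edge{a_1}(a_1,v)$ and is then forced, by the out-degree-one property, to read the letters of $v$ and return to $\omega$, so it spells exactly $a_1v\in X$. Conversely, for $x=a_1\cdots a_n$ the continuation from $(a_1,v)$ spells $a_1v$, so the only first edge compatible with spelling $x$ is the one reaching $(a_1,a_2\cdots a_n)$, after which the whole path is determined. Hence there is exactly one first-return path spelling each $x\in X$, and every first-return path spells some element of $X$.

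Finally I would cut an arbitrary path from $\omega$ to $\omega$ labeled $w$ at each of its visits to $\omega$, writing it uniquely as a concatenation of first-return paths spelling $x_1,\ldots,x_k\in X$ with $w=x_1\cdots x_k$; conversely every factorization $w=x_1\cdots x_k$ with $x_i\in X$ reassembles into exactly one such path. This yields a bijection between paths $\omega\to\omega$ labeled $w$ and factorizations of $w$ in words of $X$, so $(|\A|,w)=(\u(X)^*,w)$ for every $w$, that is $|\A|=\u(X)^*$. I expect the only point needing real care to be the chain analysis of the first step, namely checking that each internal state has out-degree one and that a word of $X$ can be entered only from $\omega$; everything downstream is a routine decomposition of paths at their $\omega$-visits.
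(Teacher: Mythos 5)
Your proof is correct. The paper itself gives no proof of this proposition, stating only that it is straightforward and generalizes the flower-automaton result for codes (\cite[Theorem 4.2.2]{BerstelPerrinReutenauer2009}); your argument — out-degree-one chains of internal states, the bijection between first-return paths at $\omega$ and elements of $X$, then cutting arbitrary paths at their $\omega$-visits — is precisely the standard argument the authors are alluding to, carried out in full, and it correctly avoids relying on unambiguity (which fails when $X$ is not a code) by counting paths rather than asserting uniqueness.
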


\begin{example}\label{example{a,ab,ba}}
Let $X=\{a,ab,ba\}$. The flower automaton of $X^*$ is represented in Figure~\ref{figureFlower1}. 
As an example, there are two paths from $\omega$
to $\omega$ labeled $aba$, corresponding to the two factorizations $(a)(ba)=(ab)(a)$.
\begin{figure}[hbt]
\centering
\includegraphics{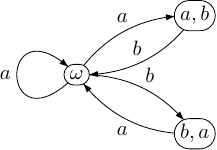}
\caption{The flower automaton of $X$ (Example~\ref{example{a,ab,ba}}).}\label{figureFlower1}
\end{figure}
\end{example}

A more compact version of the flower automaton is the \emph{prefix automaton}
$\A=(P,1,1)$ of a finite set $X\subset A^+$. Its set of states is the set $P$ of proper
prefixes of $X$ and its edges are the $p\edge{a}pa$ for every
$p\in P$ and $a\in A$ such that $pa \in P$ and the $p\edge{a}1$
such that $pa\in X$. It also recognizes $X^*$ with multiplicities.

\begin{example}\label{example-aa-aaa}
Let $X=\{a^2,a^3\}$. The flower automaton of $X$ is shown in Figure~\ref{figureFlower1bis} on the left and its prefix automaton on the right.
\begin{figure}[hbt]
  \centering
  \includegraphics{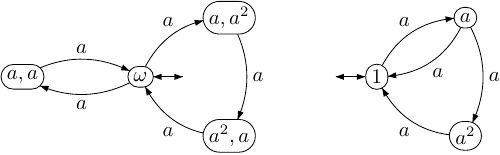}
\caption{The flower automaton and the prefix automaton of $X$ (Example~\ref{example-aa-aaa}).}
\label{figureFlower1bis}
\end{figure}
\end{example}

A \emph{reduction} from an automaton $\A=(P,i,t)$ onto an automaton
$\B=(Q,j,u)$ is a surjective map $\rho:P\to Q$ such that
$\rho(i)=j$, $\rho(t)=u$ and such that 
for every 
$q,q'\in Q$ and $w\in A^*$, there is a path $q\edge{w}q'$
in $\B$  if and only if there is a path $p\edge{w}p'$ in $\A$ for some $p,p'\in P$ with $\rho(p)=q$ and $\rho(p')=q'$

The reduction is \emph{sharp} if $\rho^{-1}(j)=\{i\}$ and $\rho^{-1}(u)=\{t\}$.

\begin{proposition}
Let $\rho$ be a reduction from $\A=(P,i,t)$ onto $\B=(Q,j,u)$.
Then $L(\A)\subset L(\B)$, with equality if $\rho$ is sharp.
\end{proposition}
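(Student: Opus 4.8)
The plan is to unwind the definitions directly, since both inclusions come from instantiating the reduction condition~\eqref{conditionReduction} at the distinguished states $j$ and $u$. No auxiliary construction seems needed: the whole statement is a bookkeeping consequence of the equivalence built into the definition of reduction, together with the fact that $\rho$ respects initial and terminal states.

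For the inclusion $L(\A)\subset L(\B)$, I would start from a word $w\in L(\A)$, which by definition labels a path $i\edge{w}t$ in $\A$. Taking $q=j$ and $q'=u$ in~\eqref{conditionReduction}, and using $\rho(i)=j$ and $\rho(t)=u$, the path $i\edge{w}t$ witnesses exactly the right-hand side of the equivalence (with the choice $p=i$, $p'=t$). The corresponding direction of the equivalence then yields a path $j\edge{w}u$ in $\B$, so $w\in L(\B)$. I would stress that this direction uses neither surjectivity of $\rho$ nor sharpness, only that $\rho$ maps the initial state to the initial state and the terminal state to the terminal state; hence the inclusion $L(\A)\subset L(\B)$ holds unconditionally.

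For the reverse inclusion under the sharpness hypothesis, I would take $w\in L(\B)$, so that there is a path $j\edge{w}u$ in $\B$. Applying~\eqref{conditionReduction} once more with $q=j$ and $q'=u$, but now in the forward direction, produces states $p,p'\in P$ with $\rho(p)=j$, $\rho(p')=u$ and a path $p\edge{w}p'$ in $\A$. Here is where sharpness enters: the assumptions $\rho^{-1}(j)=\{i\}$ and $\rho^{-1}(u)=\{t\}$ force $p=i$ and $p'=t$, so the path is in fact $i\edge{w}t$, giving $w\in L(\A)$. Combined with the previous paragraph this yields $L(\A)=L(\B)$.

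I do not expect a genuine obstacle, since the argument is essentially a one-line application of the definition at the endpoints. The only points requiring care are to keep straight which direction of the equivalence in~\eqref{conditionReduction} is invoked for each inclusion, and to make explicit that sharpness is used precisely to collapse the preimages of $j$ and $u$ to the single states $i$ and $t$, which is exactly what upgrades the inclusion $L(\A)\subset L(\B)$ to an equality.
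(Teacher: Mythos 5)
Your proof is correct. The paper states this proposition without proof, treating it as an immediate consequence of the definitions, and your argument is exactly the intended one: instantiating the equivalence \eqref{conditionReduction} at the pair $(j,u)$ together with $\rho(i)=j$, $\rho(t)=u$ gives $L(\A)\subset L(\B)$ unconditionally, and sharpness is used precisely where you say it is, to collapse $\rho^{-1}(j)$ and $\rho^{-1}(u)$ to $\{i\}$ and $\{t\}$ and obtain the reverse inclusion.
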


The term reduction is the one used in~\cite{BerstelPerrinReutenauer2009}
and it is not standard but captures the general idea of a covering.
The term conformal morphism is the one used in \cite{Sakarovitch2009}.
The following statement replaces \cite[Proposition 4.2.5]{BerstelPerrinReutenauer2009}.

\begin{proposition}\label{proposition4.2.5}
Let $X\subset A^+$ be a finite set which is the minimal
generating set of $X^*$. For each trim automaton
$\B=(Q,i,i)$ recognizing $X^*$ with multiplicities, there is a sharp reduction
from the
flower automaton of $X$ onto $\B$.
\end{proposition}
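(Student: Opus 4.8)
The plan is to define the reduction $\rho$ from the flower automaton $\A$ of $X$, with both initial and terminal state $\omega$, onto $\B=(Q,i,i)$ on states as follows. Set $\rho(\omega)=i$. For a state $(u,v)$ with $uv\in X$, I first observe that, since $X$ is the minimal generating set of $X^*$, the word $uv\in X$ admits exactly one factorization in $X$, namely itself; hence by hypothesis $(|\B|,uv)=(\u(X)^*,uv)=1$, so there is a \emph{unique} path $i\edge{uv}i$ in $\B$. I then define $\rho((u,v))$ to be the state of $\B$ reached from $i$ after reading the prefix $u$ along this unique path. This makes $\rho$ well defined, and the same uniqueness shows compatibility with the four families of edges of the flower automaton: each flower edge $s\edge{a}s'$ lies, under this identification, on the unique path of some $x\in X$, and therefore maps to a genuine edge $\rho(s)\edge{a}\rho(s')$ of $\B$. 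Concatenating, every flower path labelled $w$ from $s$ to $s'$ is sent to a path $\rho(s)\edge{w}\rho(s')$ of $\B$; this already yields the easy implication of~(\ref{conditionReduction}), from the flower side to $\B$.

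For sharpness I would show $\rho^{-1}(i)=\{\omega\}$. If some $(u,v)\neq\omega$ were mapped to $i$, then along the unique path of $uv$ we would have $i\edge{u}i\edge{v}i$ with $u,v\neq 1$; since the support of $|\B|=\u(X)^*$ is $X^*$, every loop at $i$ reads a word of $X^*$, so $u,v\in X^+$ and $uv$ would be a product of at least two elements of $X$, contradicting $uv\in X$. Hence the reduction will be sharp.

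The crucial point is the reverse implication of~(\ref{conditionReduction}), namely lifting a path of $\B$ to the flower automaton. For a fixed $z\in X^*$, the edge preservation above yields a map $\pi\mapsto\rho(\pi)$ from the loops $\omega\edge{z}\omega$ of the flower automaton to the loops $i\edge{z}i$ of $\B$. I claim this map is injective: two distinct flower loops labelled $z$ correspond to two distinct factorizations of $z$, hence differ at some position where one is at $\omega$ and the other at a state $(u,v)\neq\omega$; by sharpness their images are then at $i$ and at a state $\neq i$ respectively, so $\rho(\pi_1)\neq\rho(\pi_2)$. Now both loop sets have cardinality $(\u(X)^*,z)$ — the flower automaton because it recognises $X^*$ with multiplicities, and $\B$ by hypothesis — so the injection between these finite equinumerous sets is a bijection.

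Finally I would deduce the missing implication and surjectivity from this bijection. Given a path $q\edge{w}q'$ in $\B$, trimness provides $\alpha,\beta$ with a loop $i\edge{\alpha}q\edge{w}q'\edge{\beta}i$ labelled $z=\alpha w\beta\in X^*$; by surjectivity of $\pi\mapsto\rho(\pi)$ this loop equals $\rho(\pi)$ for some flower loop $\pi=\bigl(\omega\edge{\alpha}s\edge{w}s'\edge{\beta}\omega\bigr)$, and reading off the states at the relevant positions gives $\rho(s)=q$, $\rho(s')=q'$ together with a flower path $s\edge{w}s'$, which is exactly~(\ref{conditionReduction}). Taking $w=1$ shows every state of $\B$ lies in the image, so $\rho$ is onto. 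The main obstacle is precisely this lifting: unlike the code case, where unambiguity forces the single $\B$-path and the single flower path of $z$ to correspond, here several paths occur and one must match their numbers. The argument succeeds because sharpness makes the induced map injective while the multiplicity hypothesis guarantees that the two path-counts agree, turning injectivity into the required bijection.
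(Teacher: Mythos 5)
Your proof is correct, and its first half coincides with the paper's: you define $\rho$ exactly as the paper does (the image of $(u,v)$ is the state reached after reading $u$ along the unique path $i\edge{uv}i$, unique because $uv\in X$ has a single $X$-factorization and $\B$ counts factorizations), and your edge-preservation argument gives the same easy direction of~(\ref{conditionReduction}) that the paper obtains by a case analysis on whether the flower path passes through $\omega$. The two proofs part ways on the crucial lifting direction. The paper argues directly: it extends $q\edge{w}q'$ by simple paths $i\edge{u}q$ and $q'\edge{v'}i$, factorizes the label $uwv'=x_1\cdots x_n$, and asserts that $u$ is a proper prefix of $x_1=uv$ and that $\rho(u,v)=q$ --- assertions which implicitly use that every loop at $i$ in $\B$ passes through $i$ exactly at the boundaries of some $X$-factorization of its label, a fact the paper does not establish. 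You instead prove sharpness first (from $L(\B)=X^*$ and minimality of $X$), use it to show that $\pi\mapsto\rho(\pi)$ is injective on the loops labelled by a fixed $z$ (two distinct flower loops differ at a position where one is at $\omega$ and the other is not, and sharpness separates their images), and then invoke the equality of both loop counts with $(\u(X)^*,z)$ to upgrade this injection to a bijection, from which the lifting of paths and the surjectivity of $\rho$ both follow. This counting argument is a genuinely different key step, and it buys rigor exactly where the paper is terse: the boundary-decomposition property the paper takes for granted is most naturally justified by the very injectivity-plus-counting argument you give. Conversely, the paper's route, once that property is granted, is shorter and constructive --- it exhibits the lifted path explicitly as $(u,v)\edge{v}\omega\edge{x_2\cdots x_{n-1}}\omega\edge{u'}(u',v')$ --- while your version has the additional merit of recording explicitly the sharpness and surjectivity verifications that the paper leaves implicit.
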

\begin{proof}
Let $\A=(P,\omega,\omega)$ be the flower automaton of $X$.
We define a map $\rho:P\to Q$ as follows. We set first $\rho(\omega)=i$.
Next, if $(u,v)\in P$ with $(u,v)\ne\omega$, then $uv\in X$. Since $X$ is the minimal
generating set of $X^*$,  there is only one factorization of $uv$
into elements of $X$.  Since $\B$
recognizes $X$ with multiplicities, there is only one path 
$i\edge{u}q\edge{v}i$
in $\B$. We define $\rho\bigl((u,v)\bigr)=q$.

It is straightforward to verify that $\rho$ is a reduction. Assume first
that $q\edge{w}q'$ in $\B$. Let $i\edge{u}q$ and $q'\edge{v'}i$ be simple paths,
that is not passing by $i$ except at the origin or the end. Then 
$i\edge{uwv'}i$ and thus $uwv'=x_1x_2\cdots x_n$ with $x_i\in X$,
$u$ a proper prefix
of $x_1=uv$ and $v'$ a proper suffix of $x_n=u'v'$. Thus $\rho\bigl((u,v)\bigr)=q$
and $\rho\bigl((u',v')\bigr)=q'$. Since $w=vx_2\cdots x_{n-1}u'$,
we have in $\A$ a path
$(u,v)\edge{w}(u',v')$. Conversely, consider a path
$(u,v)\edge{w}(u',v')$ in $\A$. If the path does not pass
by $\omega$, then $u'=uw$, $v=wv'$
and we have a path $q\edge{w}q'$
in $\B$ with $\rho\bigl((u,v)\bigr)=q$ and $\rho\bigl((u',v')\bigr)=q'$.
Otherwise, the path
decomposes in $(u,v)\edge{v}\omega\edge{x}\omega\edge{v'}(u',v')$ with $x\in X^*$. Since $\B$ recognizes $X^*$, we have
a path $i\edge{x}i$ in $\B$ and thus also a path $q\edge{w}q'$
with $q=\rho\bigl((u,v)\bigr)$ and $q'=\rho\bigl((u',v')\bigr)$.
\end{proof}

The statement above is false if $X$ is not the minimal generating
set of $X^*$, as shown by the following example.

\begin{example}
Let $X=\{a,a^2\}$. There is no sharp reduction from the 
automaton of Figure~\ref{figureFibo} to the one-state
automaton recognizing $X^*=\{a\}^*$.
\end{example}

The statement is also false if the automaton $\B$ recognizes $X^*$, but does not recognize $X^*$ with multiplicities, as shown by the following example.

\begin{example}
Let $X=\{a^2\}$. The flower automaton of $X$ is represented in Figure~\ref{figureNoRed} on the left. There is no reduction to the automaton
represented on the right which also recognizes $X^*$
(but not with multiplicities).
\begin{figure}[hbt]
\centering
\includegraphics{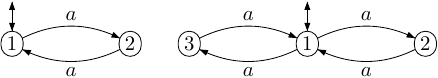}
\caption{Two automata recognizing $X^*$.}
\label{figureNoRed}
\end{figure}
\end{example}

\section{Transducers} \label{sect-tranducers}
A literal \emph{transducer} $\Tr=(Q,i,t)$
on a set of states $Q$ with an \emph{input alphabet} $A$
and an \emph{output alphabet} $B$ is defined by a set
of edges $E$ which are of the form $p\longedge{(a,v)}q$ with $p,q\in Q$,
$a\in A$ and $v\in B\cup \{1\}$. The \emph{input automaton} associated
with a transducer is the automaton with the same set
of states and edges but with the output labels removed.

The relation \emph{realized} by the transducer $\Tr$ is the set of pairs $(u,v)\in A^*\times B^*$ such that there is a path from $i$ to $t$ labeled $(u,v)$.
We denote by $\varphi_\Tr$ the morphism from $A^*$ to the monoid
of $Q\times Q$-matrices with elements in $\N\langle B\rangle$
defined for $u\in A^*$ and $p,q\in Q$ by $\textstyle \varphi_\Tr(u)_{p,q}=\sum_{p\longedge{u|v}q  }v$.

Let $X\subset A^+$ be a finite set. Let $\beta:B^*\to A^*$ be a \emph{coding morphism} for $X$, that is, a morphism whose restriction to $B$ is a bijection  onto $X$.
The \emph{decoding relation} for $X$ is the relation $\gamma=\{(u,v)\in A^*\times B^*\mid u=\beta(v)\}$. 
A \emph{decoder} for $X$ is a literal transducer which realizes
the decoding relation.
The \emph{flower transducer} associated to $\beta$ is the literal
tranducer built on the flower automaton of $X$ by adding an output
label $1$ to each edge $\omega\edge{a}(a,v)$
or $(u,av)\edge{a}(ua,v)$ and an output label $b$ to each edge
$\omega\edge{a}\omega$ such that $a\in X$ with $\beta(b)=a$ or
$(u,a)\edge{a}\omega$ such that $ua=x\in X$ 
with $\beta(b)=x$.

\begin{proposition}
For every finite set $X \subset A^+$ 
with a coding morphism $\beta$, the flower transducer
associated to $\beta$ is a decoder
for $X$. 
\end{proposition}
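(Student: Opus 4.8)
The plan is to show that the relation realized by the flower transducer is exactly the decoding relation $\gamma=\{(u,v)\mid u=\beta(v)\}$. The starting point is the observation, already underlying the fact that the flower automaton recognizes $X^*$ with multiplicities, that a path from $\omega$ to $\omega$ decomposes canonically at each intermediate visit to $\omega$. Since the only edges leaving $\omega$ go either to a state of the form $(a,w)$ or back to $\omega$, and the only edges entering $\omega$ come either from a state of the form $(z,a)$ or from $\omega$, every such path splits into a concatenation of \emph{excursions}, each leaving $\omega$, staying among the internal states (those $(p,q)$ with $p,q\ne 1$), and returning to $\omega$ without meeting it in between; the degenerate excursion is the single loop $\omega\edge{a}\omega$ with $a\in X$.

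First I would check that each excursion reads exactly one word of $X$ and outputs exactly one letter of $B$. Along an internal edge $(p,aq)\edge{a}(pa,q)$ the product of the two coordinates is unchanged, so an entire excursion reads one fixed word $x\in X$: it begins with $\omega\edge{a}(a,w)$ and ends with $(z,a')\edge{a'}\omega$. By the definition of the flower transducer the intermediate edges all carry the output $1$, and only the closing edge carries an output letter $b$ with $\beta(b)=x$; the degenerate loop is treated the same way. Hence an excursion that reads $x$ outputs precisely $\beta^{-1}(x)$.

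Both inclusions then follow at once. Given a path from $\omega$ to $\omega$ with input $u$ and output $v$, its excursion decomposition yields $u=x_1\cdots x_n$ with $x_i\in X$ and $v=b_1\cdots b_n$ with $\beta(b_i)=x_i$; applying $\beta$ gives $\beta(v)=x_1\cdots x_n=u$, so the realized pair lies in $\gamma$. Conversely, given $(u,v)\in\gamma$, write $v=b_1\cdots b_n$ and put $x_i=\beta(b_i)\in X$, so $u=x_1\cdots x_n$; reading each $x_i$ along its excursion and concatenating produces a path with input $u$ and output $v$. The two relations therefore coincide.

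The only delicate point is the excursion decomposition, namely verifying that once a path leaves $\omega$ it cannot return to $\omega$ except through a closing edge and that the product of the two coordinates is invariant along an excursion. This is what forces the output of each excursion to be a single letter and makes the bookkeeping in both directions match; the rest is routine.
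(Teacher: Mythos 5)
Your proof is correct: the paper states this proposition without proof (treating it, like the corresponding statement for the flower automaton, as straightforward), and your excursion decomposition of paths through $\omega$, with the product of the two coordinates invariant along internal edges forcing each excursion to read one word $x\in X$ and output the single letter $\beta^{-1}(x)$, is exactly the argument the paper implicitly relies on. Both inclusions are established and nothing is missing.
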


\begin{example}
Let $X=\{a,ab,ba\}$ and let $\beta:u\to a, v\to ab, w\to ba$.
The flower transducer associated to $\beta$ is
represented in Figure~\ref{figureDecoder}.
\begin{figure}[hbt]
\centering
\includegraphics{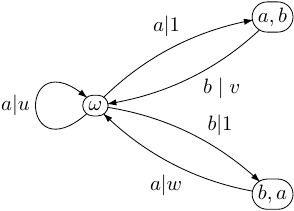}
\caption{The flower transducer associated to $\beta$.}\label{figureDecoder}
\end{figure}
One has
\begin{displaymath}
\varphi_\Tr(a)=\begin{bmatrix}u&1&0\\0&0&0\\w&0&0\end{bmatrix}
\text{ and }
\varphi_\Tr(b)=\begin{bmatrix}0&0&1\\v&0&0\\0&0&0\end{bmatrix}.
\end{displaymath}
\end{example}

The \emph{prefix transducer} $\Tr=(P,1,1)$ is the 
same modification of the prefix automaton.
Its states are the proper prefixes of the elements of $X$.
There is an edge $p\edge{a|1}pa$ for every prefix $p$
and every letter $a$ such that $pa \in P$, and an edge $p\edge{a|b}1$ for every
prefix $p$ and letter $a$ such that $pa=\beta(b)\in X$.
Thus the input automaton of the prefix transducer of $X$
is the prefix automaton of $X$.

Let $\B=(Q,j,j)$ be an automaton on the alphabet $B$ 
and let $\Tr=(P,i,i)$ be a literal transducer with the input alphabet $A$
and the output alphabet $B$.
We build an automaton $\A=\B\circ \Tr$ on the alphabet $A$ as follows. Its
set of states is $Q\times P$ and for every $a\in A$, the matrix
$\varphi_\A(a)$ is obtained by replacing in $\varphi_\Tr(a)$
the word $w=\varphi_\Tr(a)_{p,q}$ by the matrix $\varphi_\B(w)$.
The initial and terminal state is $(j,i)$.
The automaton $\A$ is also called the \emph{wreath product}
of $\B$ and $\Tr$ (see~\cite{Eilenberg1976}). The word $1$ is replaced by the identity matrix, and $0$ is replaced by the zero matrix of appropriate size. An example of $\A=\B\circ \Tr$ is provided in Example \ref{example-composition}.

\section{Composition} \label{sect-composition}
Let $Y\subset B^+$ and $Z\subset A^+$ be finite sets of words such that
there exists a bijection $\beta:B\to Z$.
Two such sets
are called \emph{composable}. Then $X=\beta(Y)$ is called
the \emph{composition} of $Y$ and $Z$ through $\beta$, where $\beta(Y) = \{\beta(y) \mid y \in Y\}$ with $\beta$ naturally extended to the mapping $B^* \to Z^*$. We denote
$X=Y\circ_\beta Z$. We also denote $X=Y\circ Z$ when $\beta$ is
clear. We say that $X=Y\circ Z$ is a \emph{decomposition}
of~$X$.

\begin{example}\label{exampleComposition}
Let $Y=\{u,uw,vu\}$ and $Z=\{a,ab,ba\}$ with $\beta:u\to a, v\to ab, w\to ba$.
Then $X=Y\circ_\beta Z =\{a,aba\}$.
\end{example}

A decomposition $X=Y\circ_\beta Z$ of a finite set $X$ is \emph{trim} if every letter of $B$ appears in a word of $Y$ and every word in $X$ is obtained in a unique way from words in $Y$, that is, if the restriction of $\beta$ to $Y$ is injective. For any decomposition
$X=Y\circ Z$, there are $Y'\subset Y$ and $Z'\subset Z$ such that $X=Y'\circ Z'$ is trim.
Indeed, if $x\in X$ has two decompositions in words of $Z$
as $x=z_1z_2\cdots z_n=z'_1z'_2\cdots z'_{n'}$, we
may remove $\beta^{-1}(z'_1z'_2\cdots z'_{n'})$ from $Y$
without changing $X$. A finite number of these removals gives a trim decomposition. The set $Z'$ is obtained by removing all words in $Z$ which correspond to the letters no longer occurring in words in $Y'$ (we also remove such letters from $B$). The decomposition in Example \ref{exampleComposition} is not trim, since $aba = \beta(uw) = \beta(vu)$, but it can be made trim by taking $X = Y' \circ Z'$ with $Y' = \{u, uw\}$ and $Z' = \{a, ba\}$. In this case, $Y' \subset \{u, w\}^+$.

A set $X\subset A^*$ is \emph{complete} if any word in $A^*$ is a factor
of a word in $X^*$.

\begin{proposition}\label{propositionRedComp}
Let $Y\subset B^+$ and $Z\subset A^+$ be two composable
finite sets and let $X=Y\circ_\beta Z$ be a  trim decomposition. Let 
$\B=(Q,1,1)$
be the prefix
automaton of $Y$ and let $\Tr=(P,1,1)$ be the prefix
transducer of $Z$. The automaton $\A=\B\circ\Tr$
recognizes $X^*$ with multiplicities.

If $Y$ is complete,
there is a reduction $\rho$ from $\A$ onto the prefix
automaton of $Z$. Moreover, the automaton $\B$ can be identified
through $\beta$ with the restriction of $\A$ to $\rho^{-1}(1)$.
\end{proposition}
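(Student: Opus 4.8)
The plan is to exploit the block structure of the wreath product $\A=\B\circ\Tr$, whose state set is $Q\times P$. First I would read off the edges of $\A$ from the construction: there is an edge $(s,p)\edge{a}(t,q)$ exactly when $\Tr$ has an edge $p\edge{a|w}q$ and $\B$ has a path $s\edge{w}t$; since $w$ is a letter of $B$ or the empty word, this last condition means either that $\B$ has the edge $s\edge{w}t$, or that $s=t$ when $w=1$. From this I would establish the structural claim that a path $(1,1)\edge{u}(1,1)$ in $\A$ amounts to exactly the data of a path $1\edge{u|v}1$ in $\Tr$ together with a path $1\edge{v}1$ in $\B$, where $v$ is the word output along the $\Tr$-path. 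Indeed, the $P$-coordinate of an $\A$-path runs through $\Tr$, reading $u$ and producing some $v\in B^*$, while the $Q$-coordinate stays constant except when a factor of $Z$ is completed, where it follows the edge of $\B$ labelled by the letter then emitted; so the $Q$-coordinate traces a path of $\B$ labelled precisely $v$. This path bijection is the structural heart of the argument, and I expect its careful verification to be the first main task.

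Next I would count paths. Since $\Tr$ is a decoder for $Z$ whose input automaton is the prefix automaton of $Z$, the paths $1\edge{u|v}1$ in $\Tr$ are in bijection with the words $v\in B^*$ satisfying $\beta(v)=u$, one path per such $v$ (the output $v$ records the completion points, which in turn determine the whole path). Because $\B$ recognizes $Y^*$ with multiplicities, the number of paths $1\edge{v}1$ in $\B$ is $(\u(Y)^*,v)$, the number of factorizations of $v$ in words of $Y$. Combining these, the number of paths $(1,1)\edge{u}(1,1)$ in $\A$ is $\sum_{\beta(v)=u}(\u(Y)^*,v)$. It then remains to identify this with $(\u(X)^*,u)$. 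Here the trimness hypothesis is decisive: since $\beta$ restricted to $Y$ is injective, the map sending a factorization $x_1\cdots x_l$ of $u$ in words of $X$ to the word $v=\beta^{-1}(x_1)\cdots\beta^{-1}(x_l)\in Y^*$, together with this factorization of $v$, is a bijection onto the pairs consisting of a word $v$ with $\beta(v)=u$ and a $Y$-factorization of it. Counting both sides gives $|\A|=\u(X)^*$.

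For the second part I would take $\rho\colon Q\times P\to P$ to be the projection $(s,p)\mapsto p$; it maps the initial and terminal state $(1,1)$ to $1$ and is visibly onto. One half of the reduction condition is immediate, since projecting any path $(s,p)\edge{w}(t,q)$ of $\A$ yields a path $p\edge{w}q$ in the prefix automaton of $Z$. The content, and what I expect to be the main obstacle, is the converse lifting, which is exactly where completeness of $Y$ is used. Given a path $p\edge{w}q$ in the prefix automaton of $Z$, I would lift it to the path $p\edge{w|v}q$ of $\Tr$ and read off its output $v\in B^*$; to obtain an $\A$-path I must produce states $s,t\in Q$ with a path $s\edge{v}t$ in $\B$. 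Such a path exists if and only if $v$ is a factor of a word of $Y^*$, and completeness of $Y$ guarantees this for every $v\in B^*$. Splicing the $\Tr$-path together with this $\B$-path then gives the required path $(s,p)\edge{w}(t,q)$ in $\A$, so $\rho$ is a reduction.

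Finally, for the identification I would restrict attention to $\rho^{-1}(1)=Q\times\{1\}$ and consider the transitions induced there by first return, that is, the paths of $\A$ that leave $Q\times\{1\}$ and come back to it for the first time. By the edge description above, such a path reads exactly one word $z=\beta(b)$ of $Z$ and carries $(s,1)$ to $(t,1)$ precisely when $\B$ has the edge $s\edge{b}t$. Transporting the labels through $\beta^{-1}$ thus identifies this induced automaton with $\B$, with no further use of completeness. The two genuinely delicate points are the path bijection of the first paragraph and the role of completeness in the lifting; the remaining verifications are routine bookkeeping.
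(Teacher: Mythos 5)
Your proof is correct and takes essentially the same approach as the paper's: the same coordinate-wise analysis of paths in $\A=\B\circ\Tr$ with trimness giving the multiplicity count, the same projection $\rho:(s,p)\mapsto p$ with completeness of $Y$ used exactly as in the paper to lift the transducer output $v$ to a path of $\B$, and the same identification of $\B$ with the first-return automaton on $Q\times\{1\}$. The only cosmetic difference is in the first part, where the paper counts simple (first-return) paths from $(1,1)$ to $(1,1)$, one per element of $X$, while you count all paths via the bijection with pairs consisting of a word $v$ with $\beta(v)=u$ and a $Y$-factorization of $v$; both rest on the same underlying path decomposition.
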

\begin{proof}
The simple paths in $\A$ have the form $(1,1)\edge{z_1}(b_1,1)\edge{z_2}(b_1b_2,1)\cdots\edge{z_n}(1,1)$ 
for $x=z_1\cdots z_n=\beta(b_1\cdots b_n)$ in $X$ and $z_i\in Z$.
Since the decomposition is trim, there is exactly one such path for every $x\in X$ and thus $\A$ recognizes $X^*$ with multiplicities.

Let us show that, if $Y$ is complete,
the map $\rho:(q,p)\to p$ is  a reduction
from $\A$ onto the prefix automaton of $Z$. 
We have to show that one has $p\edge{w}p'$ in the prefix automaton $\mathcal C$ of $Z$
if and only if there exist $q,q'\in Q$ such that $(q,p)\edge{w}(q',p')$.
Assume that $p\edge{w}p'$ in  $\mathcal C$. 
Then we have $p\edge{w|u}p'$ in the prefix transducer $\Tr$
for some $u\in B^*$. Since $Y$ is complete, there are some
$q,q'\in Q$ such that $q\edge{u}q'$ in $\B$. Then
$(q,p)\edge{w}(q',p')$ in $\A$. The converse is obvious.

Finally, the edges of the restriction of $\A$ to $\rho^{-1}(1)$
are the simple paths $(q,1)\edge{z}(q',1)$ for $z=\beta(b)\in Z$ 
and $q\edge{b}q'$ an edge of $\B$. This proves the last statement.
\end{proof}

\begin{figure}[hbt]
\centering
\includegraphics{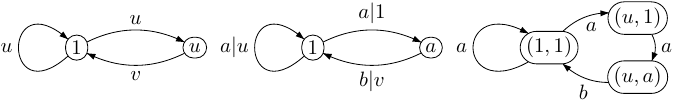}
\caption{The prefix automaton of $Y$, the prefix transducer $\Tr$ of $Z$
and the trim part of $\A$.}\label{figureFlower2}
\end{figure}

\begin{example}\label{example-composition}
Let $Y=\{u,uv\}$ and $Z=\{a,ab\}$ with $\beta:u\to a, v\to ab$.
We have, in view of Figure~\ref{figureFlower2},
\begin{displaymath}
\varphi_\A(a)=\begin{bmatrix}\varphi_\B(u)&I\\0&0\end{bmatrix}
\text{ and }
\varphi_\A(b)=\begin{bmatrix}0&0\\\varphi_\B(v)&0\end{bmatrix}.
\end{displaymath}

\end{example}

\section{Monoids of relations} \label{sect-relations}
We consider monoids of binary relations and prove some
results on idempotents and groups in such monoids.
Few authors have considered monoids of binary relations.
In~\cite{PlemmonsWest1970}, the Green's relations in the
monoid $\B_Q$ of all binary relations on a set $Q$ are considered.
It is shown in~\cite{MontaguePlemmons1969} that
any finite  group appears as a maximal subgroup
of $\B_Q$ (in contrast with the monoid of all partial
maps in which all maximal subgroups are symmetric groups).

We write indifferently relations on a set $Q$ as subsets
of $Q\times Q$, as boolean $Q\times Q$-matrices
or as directed graphs on a set $Q$ of vertices.

\enlargethispage{\baselineskip}
The \emph{rank} of a relation $m$ on $Q$ is the minimal cardinality
of a set $R$ such that $m=uv$ with $u$ a $Q\times R$ relation
and $v$ an $R\times Q$ relation. Equivalently, the
rank of $m$ is the minimal number of row (resp. column) vectors 
(which are possibly not rows or columns of $m$) which
generate over $\{0, 1\}$ the set of rows (resp. columns) of $m$.

For example, the full relation $m=Q\times Q$ has rank $1$. In terms
of matrices

\begin{displaymath}
m=\begin{bmatrix} 1\\1\\\vdots\\1\end{bmatrix}\begin{bmatrix}1&1&\cdots&1\end{bmatrix}
\end{displaymath}

More generally, the rank of an equivalence relation is equal to the number of its classes.

A \emph{fixed point} of a relation $m$ on $Q$ is
an element $q\in Q$ such that $q\longedge{m}q$.
The following result appears in~\cite{Schutzenberger1979}
(see also~\cite{LeRest1980}).

\begin{proposition}\label{propositionIdempotent}
Let $e$ be an idempotent relation on a finite set $Q$,
let $S$ be the set of fixed points of $e$ and let
$\Gamma$ be the set of strongly connected components 
of the restriction of $e$ to $S$. 
\begin{enumerate}
\item For all $p,q\in Q$ we have $p\longedge{e} q$ if and only
if there exists an $s\in S$ such that $p\longedge{e}s$ and $s\longedge{e}q$.
\item We have
\begin{equation}
e=\ell r  \label{eqColumnRow}
\end{equation}
where $\ell=\{(p,\sigma)\in Q\times\Gamma\mid p\longedge{e}s\text{ for some }
s\in \sigma\}$ and $r=\{(\sigma,q)\in \Gamma\times Q\mid s\longedge{e}q\text{ for some }
s\in \sigma\}$.
\end{enumerate}
\end{proposition}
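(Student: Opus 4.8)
The plan is to exploit the single algebraic fact that an idempotent satisfies $e=e^n$ for every $n\ge 1$ (immediate by induction from $e=e^2$), so that a relation $p\longedge{e}q$ can be witnessed by $e$-paths of \emph{arbitrary} length from $p$ to $q$. Everything reduces to this observation together with the pigeonhole principle.

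For the first assertion, the ``if'' direction is immediate: if $p\longedge{e}s$ and $s\longedge{e}q$ then $p\longedge{e^2}q$, and $e^2=e$ gives $p\longedge{e}q$. For the ``only if'' direction, suppose $p\longedge{e}q$. Since $e=e^{|Q|+1}$, there is a path $p=x_0\longedge{e}x_1\longedge{e}\cdots\longedge{e}x_{|Q|+1}=q$. This path visits $|Q|+2$ vertices, so two of them coincide, say $x_i=x_j$ with $i<j$. Then $x_i\longedge{e^{j-i}}x_i$ with $j-i\ge 1$, whence $x_i\longedge{e}x_i$ and $x_i$ is a fixed point; set $s=x_i\in S$. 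If $1\le i\le |Q|$ then $p\longedge{e^i}x_i$ and $x_i\longedge{e^{|Q|+1-i}}q$ with both exponents at least $1$, so $p\longedge{e}s$ and $s\longedge{e}q$; the extreme cases $i=0$ (so $p=x_i\in S$) and $j=|Q|+1$ (so $q=x_j\in S$) are handled by taking $s=p$ or $s=q$ directly.

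The decisive observation for the second assertion is that two fixed points lying in the same strongly connected component $\sigma$ of the restriction of $e$ to $S$ are joined by a single $e$-edge: a directed path between them inside $S$ realizes $(s,s')\in e^k$ for some $k\ge 1$, and $e^k=e$. Granting this, I would compute the Boolean product and find that $p\longedge{\ell r}q$ holds exactly when there is a component $\sigma$ and fixed points $s,s'\in\sigma$ with $p\longedge{e}s$ and $s'\longedge{e}q$. If this holds then $p\longedge{e}s\longedge{e}s'\longedge{e}q$ (using the observation for the middle edge, or $s=s'$ when they coincide), so $p\longedge{e^3}q$, i.e. $p\longedge{e}q$; this gives $\ell r\subseteq e$. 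Conversely, if $p\longedge{e}q$, the first assertion supplies a fixed point $s^\ast\in S$ with $p\longedge{e}s^\ast$ and $s^\ast\longedge{e}q$; taking $\sigma$ to be the component of $s^\ast$ and $s=s'=s^\ast$ shows $p\longedge{\ell r}q$, so $e\subseteq\ell r$. Hence $e=\ell r$.

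The only genuine difficulty is the ``only if'' direction of the first assertion, namely locating a fixed point on an $e$-path from $p$ to $q$; everything else is bookkeeping with the identity $e^n=e$. I expect the care needed there to lie entirely in the boundary cases $i=0$ and $j=|Q|+1$, where the repeated vertex is an endpoint rather than an interior vertex of the path.
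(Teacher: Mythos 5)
Your proof is correct and takes essentially the same route as the paper's: part 1 by pigeonhole on an $e$-path of length greater than $\Card(Q)$ together with $e^n=e$, and part 2 by combining part 1 (for $e\subseteq\ell r$) with the observation that two fixed points in the same strongly connected component are joined by a single $e$-edge (for $\ell r\subseteq e$). The only difference is one of care: the paper glosses over the boundary cases where the repeated vertex is an endpoint of the path, which you treat explicitly.
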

\begin{proof}
1. Choose $n>\Card(Q)$. Since $p\longedge{e^n}q$, there
is some $s\in Q$ such that $p\longedge{e^i}s\longedge{e^j}s\longedge{e^k}q$
with $i+j+k=n$. Then $p\longedge{e}s\longedge{e}s\longedge{e}q$
and the statement is proved. The other direction is obvious.

2. If $p\longedge{e}q$, let
$s\in S$ be such that $p\longedge{e}s\longedge{e}q$
and let $\sigma$ be the strongly connected component of $s$. Then
$p\longedge{\ell}\sigma\longedge{r}q$. Thus $e\le\ell r$, which means that each element of $e$ is not larger than the corresponding element of $\ell r$ when these relations are considered as binary matrices. Conversely,
if $p\longedge{\ell}\sigma\longedge{r}q$ there are $s,s'\in \sigma$
such that $p\longedge{e}s$ and $s'\longedge{s'}q$. Since
$s,s'$ are in the same stongly connected component, we have $s\edge{e}s'$
and we obtain $p\edge{e}s\edge{e}s'\edge{e}q$,
whence $p\edge{e}q$.
\end{proof}

The decomposition of $e=lr$ given by Equation~\eqref{eqColumnRow}
is called the \emph{column-row decomposition} of $e$.
Note that Proposition~\ref{propositionIdempotent}
is false without the finiteness hypothesis on $Q$.
Indeed, the relation $e=\{(x,y)\in \R^2\mid x<y\}$ is idempotent,
but has no fixed points.

\begin{example}
The matrix
\begin{displaymath}
m=\begin{bmatrix}1&1&1&0\\1&1&1&0\\0&0&0&0\\1&1&1&0\end{bmatrix}
\end{displaymath}
is an idempotent of rank $1$.
\end{example}

For an element $m$ of a monoid $M$, we denote by $H(m)$
the $\GH$-class of $m$, where $\GH$ is the Green relation
$\GH=\GR\cap\GL$ (see~\cite{BerstelPerrinReutenauer2009} for the definitions). It is a group if and only if it
contains an idempotent $e$ (see~\cite{BerstelPerrinReutenauer2009}). In this case, every
$m\in H(e)$ has a unique inverse $m^{-1}$ in the
group $H(e)$. 

The following result is the transposition of Proposition 9.1.7 in \cite{BerstelPerrinReutenauer2009} to arbitrary
monoids of relations.
However, the result is restricted to a statement on the
group $H(e)$ instead of the monoid~$eMe$.

\begin{proposition}\label{prop-idempotents}
Let $M$ be a monoid of relations on a finite set $Q$,
let $e\in M$ be idempotent and let $\Gamma$ be the set of strongly connected
components of the fixed points of $e$. For $m\in H(e)$, let $\gamma_e(m)$ be the relation on $\Gamma$ defined by 
\begin{displaymath}
\gamma_e(m)=\{(\rho,\sigma)\in \Gamma\times \Gamma\mid r\edge{m}s\edge{m^{-1}}r\text{ for some
$r\in \rho$ and $s\in\sigma$}\}
\end{displaymath}
Then $m\mapsto \gamma_e(m)$ is an isomorphism from $H(e)$ onto a group of permutations on $\Gamma$.
\end{proposition}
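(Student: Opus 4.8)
The plan is to show that $\gamma_e$ is a homomorphism from the group $H(e)$ into the monoid of Boolean relations on $\Gamma$, that it sends $e$ to $\id_\Gamma$, and that it is injective. Granting these three facts the conclusion is automatic: the image of a group under a homomorphism is a group, its identity is $\gamma_e(e)=\id_\Gamma$, and a Boolean relation possessing a two-sided inverse relative to the diagonal is necessarily a permutation (the units of the monoid of Boolean relations on a finite set are exactly the permutations). Hence the image lands in the permutations of $\Gamma$ and is a group of permutations.

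First I would record the identities used throughout. Since $e$ is idempotent, $e^k=e$ for all $k\ge 1$, so two fixed points lie in a common $\sigma\in\Gamma$ exactly when each is joined to the other by $e$ in one step; in particular $s\edge{e}s'$ whenever $s,s'$ lie in the same component. For $m\in H(e)$ the idempotent $e$ is the group identity, so $em=me=m$ and $mm^{-1}=m^{-1}m=e$, and I use the column–row decomposition $e=\ell r$ of Proposition~\ref{propositionIdempotent}. Writing $[x]$ for the component of a fixed point $x$, a short computation gives $\gamma_e(e)=\id_\Gamma$: a round trip $r\edge{e}s\edge{e}r$ forces $r,s$ into one component, while conversely each component carries the trivial round trip $s\edge{e}s$.

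The heart is the homomorphism property $\gamma_e(mn)=\gamma_e(m)\gamma_e(n)$, where $(mn)^{-1}=n^{-1}m^{-1}$ (this follows from $me=m$). Given witnesses $r\edge{m}s\edge{m^{-1}}r$ and $s'\edge{n}u\edge{n^{-1}}s'$ with $s,s'$ in a common component $\sigma$, I would glue them: from $s\edge{e}s'$ and $en=n$ I obtain $s\edge{n}u$, hence $r\edge{mn}u$; symmetrically, from $s'\edge{e}s$ and $n^{-1}e=n^{-1}$ I obtain $u\edge{n^{-1}m^{-1}}r$, so $([r],[u])\in\gamma_e(mn)$. The reverse inclusion is the same manipulation read backwards: a witness $r\edge{mn}u\edge{n^{-1}m^{-1}}r$ factors as $r\edge{m}s\edge{n}u$ and $u\edge{n^{-1}}s'\edge{m^{-1}}r$, and then $s\edge{nn^{-1}}s'$ and $s'\edge{m^{-1}m}s$ show that $s,s'$ are fixed points in a single component $\sigma$; absorbing the $e$'s splits the round trip into one for $m$ landing in $\sigma$ and one for $n$ starting in $\sigma$. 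This establishes the homomorphism, and with the first paragraph the image is a group of permutations.

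Injectivity is where I expect the real work. The clean device is the auxiliary relation $M(m)=r\,m\,\ell$ on $\Gamma$; using $\ell r=e$ and $em=me=m$ one checks $M(mn)=rm(\ell r)n\ell=r\,(men)\,\ell=M(m)M(n)$, so $M$ is a homomorphism, and it is injective because $\ell\,M(m)\,r=(\ell r)m(\ell r)=eme=m$ reconstructs $m$. It thus suffices to prove $\gamma_e(m)=\id_\Gamma\Rightarrow M(m)=M(e)$, for then $m=e$ and so $\ker\gamma_e=\{e\}$. Unwinding definitions, $M(m)$ is the set of pairs $([a],[b])$ with $a,b$ fixed and $a\edge{m}b$, and $M(e)$ is the reachability order on $\Gamma$. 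One inclusion uses the self round trip inside each component that $\gamma_e(m)=\id_\Gamma$ provides; for the other, given $a\edge{m}b$ I extract from $b\edge{m^{-1}m}b$ a fixed point $d$ with $b\edge{m^{-1}}d\edge{m}b$, so $([b],[d])\in\gamma_e(m^{-1})=\gamma_e(m)^{-1}=\id_\Gamma$ forces $[d]=[b]$, while $a\edge{m}b\edge{m^{-1}}d$ collapses to $a\edge{e}d$, giving $[a]\le[b]$. The main obstacle is precisely this last step: relations in $H(e)$ need not act as functions on the fixed points, so a permutation cannot be read off directly from the fixed points, and it is the round-trip condition built into $\gamma_e$, combined with the factorization through $M$, that forces $m$ to be determined by its action on the components.
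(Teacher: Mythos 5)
Your proof is correct, but it follows a genuinely different route from the paper's, most visibly in the injectivity step. The paper proves that each $\gamma_e(m)$ is a permutation \emph{directly}: it checks that $\gamma_e(m)$ is a well-defined map by the computation $t\edge{m^{-1}}s\edge{e}s'\edge{m}t'$ (so $t\edge{m^{-1}em}t'$, i.e.\ $t\edge{e}t'$), and gets bijectivity by symmetry; it then dismisses the morphism property as easy and proves injectivity pointwise, showing $\gamma_e(m)=\gamma_e(m')$ implies $p\edge{m}q\Rightarrow p\edge{m'}q$, with a case split on whether $p$ is a fixed point of $e$ (the non-fixed-point case is handled via part 1 of Proposition~\ref{propositionIdempotent}). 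You instead do the opposite bookkeeping: you prove the morphism property in full detail (the step the paper skips), deduce the permutation property abstractly from the fact that the units of the monoid of Boolean relations on a finite set are exactly the permutations, and prove injectivity by reducing to triviality of the kernel, using part 2 of Proposition~\ref{propositionIdempotent} (the column--row decomposition $e=\ell r$) to build the compression $M(m)=rm\ell$ with the reconstruction identity $\ell M(m) r=eme=m$. What your approach buys is structural clarity: the identity $m=\ell(rm\ell)r$ isolates exactly why the representation is faithful (an element of $H(e)$ is determined by what it does between components of fixed points), and it avoids the pointwise case analysis over non-fixed points entirely. What the paper's approach buys is self-containedness and a slightly stronger intermediate statement: it only invokes part 1 of Proposition~\ref{propositionIdempotent}, needs no external fact about units of relation monoids, and its injectivity argument gives the order comparison $m\le m'$ directly rather than passing through the kernel. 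Both arguments are sound; the one gap-like spot in yours is that the units-are-permutations fact is asserted rather than proved, but it is a standard three-line verification and does not undermine the proof.
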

\begin{proof}
First, $\gamma_e(m)$ is a map. Indeed, let $s\edge{m}t\edge{m^{-1}}s$
and $s'\edge{m}t'\edge{m^{-1}}s'$. If $s\edge{e}s'$, we have
$t\edge{m^{-1}}s\edge{e}s'\edge{m}t'$ and thus $t\edge{e}t'$.
By a symmetrical proof, we obtain that $\gamma_e(m)$ is a permutation.

Next, it is easy to verify that $\gamma_e$ is a morphism.

Finally, $\gamma_e$ is injective. Indeed, assume that for $m,m'\in H(e)$
we have $\gamma_e(m)=\gamma_e(m')$. Suppose that $p\edge{m}q$. 

Assume
first that  $p$
is a fixed point of $e$. Let $r,r'$ be such that $p\edge{m}r\edge{m^{-1}}p$
and $p\edge{m'}r'\edge{m'^{-1}}p$. Since $\gamma_e(m)=\gamma_e(m')$,
we obtain that $r,r'$ are in the same element of $\Gamma$. We
conclude that $p\edge{m'}r'\edge{e}r\edge{m^{-1}}p\edge{m}q$
which implies that $p\edge{m'}q$.

Now if $p$ is not a fixed point of $e$, since $em=m$, there
is an $r$ such that $p\edge{e}r\edge{m}q$.
By Proposition~\ref{propositionIdempotent}, there is a fixed point $r'$ of $e$ such that $p\edge{e}r'\edge{e}r\edge{m}q$. Then
$r'\edge{m}q$ implies $r'\edge{m'}q$ by the preceding argument,
and finally $p\edge{m'}q$.	
\end{proof}

We denote  $G_e=\gamma_e(H(e))$. The definition of $\gamma_e$ can be formulated differently.

\begin{proposition}\label{defEquiv}
Let $M$ be a monoid of relations on a finite set $Q$
and let $e\in M$ be an idempotent. Let $\sigma,\tau$
be two distinct connected components of fixed points of $e$ and let $s\in\sigma,t\in\tau$. If $e_{s,t}=1$, then
$m_{t,s}=0$ for every $m\in H(e)$ and thus $(\sigma,\tau)\notin\gamma_e(m)$.
If $e_{s,t}=e_{t,s}=0$ then $s\edge{m}t$ implies $(\sigma,\tau)\in \gamma_e(m)$.
\end{proposition}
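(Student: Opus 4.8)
The plan is to read the whole statement through the partial order that $e$ induces on $\Gamma$. Since $e$ is idempotent it is a transitive relation, and it is reflexive on its set $S$ of fixed points, so its restriction to $S$ is a preorder; passing to strongly connected components turns this into a genuine partial order $\preceq$ on $\Gamma$, where $\rho\preceq\rho'$ means $p\longedge{e}p'$ for some (equivalently every) $p\in\rho$, $p'\in\rho'$. In this language the hypotheses read off at once: when $\sigma\ne\tau$ and $e_{s,t}=1$ we have $\sigma\prec\tau$ strictly, while $e_{s,t}=e_{t,s}=0$ says exactly that $\sigma,\tau$ are incomparable. The crucial input is the computation already made in the proof of the previous proposition: if $s\longedge{e}s'$ are fixed points and $s\longedge{m}t_0\longedge{m^{-1}}s$, $s'\longedge{m}t_0'\longedge{m^{-1}}s'$, then $t_0\longedge{e}t_0'$. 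In our notation this says that $\pi:=\gamma_e(m)$ satisfies $\rho\preceq\rho'\Rightarrow\pi(\rho)\preceq\pi(\rho')$. Applying the same fact to $m^{-1}$, whose image is $\pi^{-1}$, shows that $\pi$ and $\pi^{-1}$ both preserve $\preceq$, so $\pi$ is an \emph{order-automorphism} of the finite poset $(\Gamma,\preceq)$. In particular no element can be moved strictly up or strictly down, since iterating $\pi$ (or $\pi^{-1}$) would produce an infinite strictly monotone chain in a finite poset.

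The first assertion then follows quickly. For $(\sigma,\tau)\notin\gamma_e(m)$: if we had $\pi(\sigma)=\tau$, then $\sigma\prec\tau=\pi(\sigma)$ would move $\sigma$ strictly up, which is impossible. For $m_{t,s}=0$ I argue by contradiction. Suppose $t\longedge{m}s$, and choose $t_1$ with $t\longedge{m}t_1\longedge{m^{-1}}t$ (such $t_1$ exists because $t\longedge{e}t$ and $e=mm^{-1}$, and it is a fixed point since $t_1\longedge{m^{-1}}t\longedge{m}t_1$ gives $t_1\longedge{e}t_1$); by definition $\pi(\tau)=[t_1]$. Composing $t_1\longedge{m^{-1}}t\longedge{m}s$ and using $m^{-1}m=e$ gives $t_1\longedge{e}s$, hence $\pi(\tau)=[t_1]\preceq\sigma$. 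Together with $\sigma\prec\tau$ this yields $\pi(\tau)\prec\tau$, moving $\tau$ strictly down, again impossible. Thus $m_{t,s}=0$, and both conclusions of the first case are established.

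For the second assertion I would begin symmetrically: from $s\longedge{m}t$ choose a fixed point $s_1$ with $s\longedge{m}s_1\longedge{m^{-1}}s$, so that $\pi(\sigma)=[s_1]$, and composing $s_1\longedge{m^{-1}}s\longedge{m}t$ gives $s_1\longedge{e}t$, that is $\pi(\sigma)=[s_1]\preceq\tau$. It remains to promote this to an equality $\pi(\sigma)=\tau$, which amounts to the reverse relation $t\longedge{e}s_1$ and then yields $(\sigma,\tau)\in\gamma_e(m)$ with witnesses $s\longedge{m}s_1\longedge{m^{-1}}s$. \textbf{This reverse inequality is the step I expect to be the main obstacle.} The order-automorphism property alone does not deliver it: an order-automorphism of a finite poset can map one of two incomparable elements strictly below the other, so $\pi(\sigma)\preceq\tau$ and incomparability of $\sigma,\tau$ do not by themselves force $\pi(\sigma)=\tau$. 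I therefore expect to need information about $m$ finer than its action on $\Gamma$. The line of attack I would pursue is to use the column--row decomposition $e=\ell r$ of Proposition~\ref{propositionIdempotent} together with incomparability simultaneously: writing out the equalities $(mm^{-1})_{s,\cdot}=e_{s,\cdot}$ and $(m^{-1}m)_{\cdot,t}=e_{\cdot,t}$ at the fixed points of the two incomparable components, and exploiting that neither $s\longedge{e}t$ nor $t\longedge{e}s$ holds, in order to pin the canonical image $[s_1]$ exactly to $\tau$ rather than letting it sit strictly below. Making this forcing precise is the delicate point.
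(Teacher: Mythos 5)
Your treatment of the first assertion is correct and complete, but it is not the paper's argument. The paper reasons on the $2\times 2$ restriction: if $e_{s,t}=1$ and $m_{t,s}=1$, then $m=eme$ forces all four entries of the restriction of $m$ to $\{s,t\}$ to equal $1$, whence $(m^n)_{t,s}=1$ for every $n$, while some power of $m$ must equal $e$ in the finite group $H(e)$ and $e_{t,s}=0$ because $\sigma\neq\tau$. Your route through order-automorphisms of the finite poset $(\Gamma,\preceq)$ is a valid alternative, and it has the advantage of making explicit the order-preservation of $\gamma_e(m)$, which the paper never states.

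The obstacle you isolate in the second assertion is not a weakness of your method: that assertion is false as stated, so no argument can close your gap. Take $Q=\{1,2,3,4\}$, the idempotent $e$ of the paper's own example in which $H(e)$ has order $2$, and the relation $m$ given by
\begin{displaymath}
e=\begin{bmatrix}1&1&0&0\\0&1&0&0\\0&0&1&1\\0&0&0&1\end{bmatrix},\qquad
m=\begin{bmatrix}0&0&1&1\\0&0&0&1\\1&1&0&0\\0&1&0&0\end{bmatrix}.
\end{displaymath}
Direct computation gives $e^2=e$, $em=me=m$ and $m^2=e$, so $m\in H(e)$ with $m^{-1}=m$. All four states are fixed points, $\Gamma$ consists of the singletons $\sigma_i=\{i\}$, and the order is exactly two parallel chains $\sigma_1\prec\sigma_2$ and $\sigma_3\prec\sigma_4$. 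With $s=1\in\sigma_1$ and $t=4\in\sigma_4$ the components are distinct, $e_{1,4}=e_{4,1}=0$ and $s\edge{m}t$, yet $\gamma_e(m)$ is the permutation $(\sigma_1\,\sigma_3)(\sigma_2\,\sigma_4)$, so $(\sigma_1,\sigma_4)\notin\gamma_e(m)$. This is precisely the configuration you predicted could not be excluded: an order-automorphism exchanging two chains, mapping one of two incomparable elements strictly below the other. The same example pinpoints the error in the paper's own proof of this half: the claim that the restriction of $m\in H(e)$ to $\{s,t\}$ is a permutation fails, the restriction here being $\begin{bmatrix}0&1\\0&0\end{bmatrix}$.

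The assertion becomes true, and your machinery proves it, if one additionally assumes that $t\edge{n}s$ for some $n\in H(e)$; this is exactly what is available in the proof of Theorem~\ref{theorem9.3.10}, the only place the proposition is invoked (there both $eme$ and $ene$ lie in $H(e)$). Indeed, write $\pi=\gamma_e(m)$ and $\psi=\gamma_e(n)$. Your lemma gives $\pi(\sigma)\preceq\tau$ and $\psi(\tau)\preceq\sigma$, hence $\psi(\pi(\sigma))\preceq\psi(\tau)\preceq\sigma$. Since an automorphism of a finite poset cannot move an element strictly down, $\psi(\pi(\sigma))=\sigma$; antisymmetry then forces $\psi(\tau)=\sigma=\psi(\pi(\sigma))$, and injectivity of $\psi$ yields $\pi(\sigma)=\tau$, that is, $(\sigma,\tau)\in\gamma_e(m)$. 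So the correct course is not to force the equality $\pi(\sigma)=\tau$ from incomparability alone, but to restate the proposition with the two-sided hypothesis and prove it as above.
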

\begin{proof}
Assume first that $e_{s,t}=1$ so that the restriction
of $e$ to $\{s,t\}$ is the matrix $\begin{bmatrix}1&1\\0&1\end{bmatrix}$. If $m_{t,s}=1$, then the restriction of $m$
to $\{s,t\}$ is the matrix with all ones, which is impossible 
since no power of $m$ can be equal to $e$.
If the restriction of $e$ to $\{s,t\}$ is the identity, then
the restriction of $m\in H(e)$ is a permutation. Thus
$(\sigma,\tau)\in\gamma_e(m)$ if and only if $s\edge{m}t$.
\end{proof}

The following extends Proposition 9.1.9 in \cite{BerstelPerrinReutenauer2009}.
It uses the Green relation $\GD=\GL\GR=\GR\GL$. Two permutation groups $G$ over $Q$ and $G'$ over $Q'$ are called \emph{equivalent} if there exists a bijection $\alpha:Q \to Q'$ and an isomorphism $\psi: G \to G'$ such that for all $q \in Q$ and $g \in G$ we have $\alpha(q.g) = \alpha(q).\psi(g)$, where $q.g$ is the action of $g$ on $q$ (see Section 1.13 of \cite{BerstelPerrinReutenauer2009}). In a more standard terminology, two permutation groups are equivalent if and only if their group actions are isomorphic, though we use the terminology of \cite{BerstelPerrinReutenauer2009} to simplify the comparison with the results described there.

\begin{proposition}\label{prop-groups-equiv}
Let $M$ be a monoid of relations on a finite set $Q$
and let $e,e'\in M$ be $\GD$-equivalent idempotents. Then
the groups $G_e$ and $G_{e'}$ are equivalent permutation groups.
\end{proposition}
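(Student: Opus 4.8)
The plan is to realize the group isomorphism abstractly by Green's machinery and then transport the two permutation actions through it. First I would invoke the standard description of $\GD$-equivalent idempotents: since $e\mathrel{\GD}e'$, there are mutually inverse $a,a'\in M$ with
\begin{displaymath}
aa'=e,\quad a'a=e',\quad ea=a=ae',\quad e'a'=a'=a'e,
\end{displaymath}
and the map $\theta\colon H(e)\to H(e')$, $\theta(m)=a'ma$, is a group isomorphism with inverse $n\mapsto ana'$ (so $\theta(m^{-1})=\theta(m)^{-1}$ and $\theta(e)=a'ea=a'a=e'$). Through the isomorphisms $\gamma_e$ and $\gamma_{e'}$ this already gives a group isomorphism $G_e\to G_{e'}$; what remains is a compatible bijection between the sets $\Gamma$ and $\Gamma'$ of strongly connected components of the fixed points of $e$ and of $e'$.

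Next I would pass from elements to fixed points. Write $S,S'$ for the fixed points of $e,e'$ and $[\,\cdot\,]$ for the component of a fixed point. If $s\in S$, then $s\edge{aa'}s$, so there is $p$ with $s\edge{a}p\edge{a'}s$; then $p\edge{a'a}p$, i.e. $p\edge{e'}p$, whence $p\in S'$. This motivates the relation $\Phi\subseteq S\times S'$ of pairs $(s,p)$ with $s\edge{a}p$ and $p\edge{a'}s$. The first key step is that $\Phi$ descends to a well-defined bijection $\psi\colon\Gamma\to\Gamma'$. Indeed, if $(s,p),(s',p')\in\Phi$ with $s,s'$ in one component, then $p\edge{a'}s\edge{e}s'\edge{a}p'$ gives $p\edge{a'ea}p'=p\edge{e'}p'$ (using $ea=a$ and $a'a=e'$), and symmetrically $p'\edge{e'}p$, so $p,p'$ lie in a single component of $e'$; hence $\psi(\sigma):=[p]$ for any $(s,p)\in\Phi$ with $s\in\sigma$ is well defined. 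The relation $\Phi^{-1}$, built the same way from $a',a$, defines $\psi^{-1}$, so $\psi$ is a bijection and in particular $e,e'$ have equal rank.

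Finally I would check equivariance, namely $\psi(\gamma_e(m)(\sigma))=\gamma_{e'}(\theta(m))(\psi(\sigma))$ for $m\in H(e)$. Fix $s\in\sigma$ and a round trip $s\edge{m}t\edge{m^{-1}}s$ realizing $\gamma_e(m)(\sigma)=\tau\ni t$, and choose $\Phi$-partners $\bar s,\bar t\in S'$ with $(s,\bar s),(t,\bar t)\in\Phi$, so that $\psi(\sigma)=[\bar s]$ and $\psi(\tau)=[\bar t]$. Concatenating edges yields $\bar s\edge{a'}s\edge{m}t\edge{a}\bar t$ and $\bar t\edge{a'}t\edge{m^{-1}}s\edge{a}\bar s$, that is $\bar s\edge{\theta(m)}\bar t\edge{\theta(m)^{-1}}\bar s$; since $\bar s,\bar t$ are fixed points of $e'$, this is precisely a round trip witnessing $([\bar s],[\bar t])\in\gamma_{e'}(\theta(m))$. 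Thus $\gamma_{e'}(\theta(m))(\psi(\sigma))=\psi(\tau)=\psi(\gamma_e(m)(\sigma))$, and the pair $(\psi,\theta)$ is the required equivalence of permutation groups.

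The hard part is concentrated in the second step: showing that $\Phi$ descends to a genuine bijection of \emph{components} rather than merely a relation between fixed points. This is where all four identities $ea=a=ae'$, $e'a'=a'=a'e$ together with $aa'=e$, $a'a=e'$ get used, and where one must keep in mind that composition here is boolean (relational), not the $0,1$-integer product, so that the reachability between distinct components is exactly what must be controlled. Once $\psi$ is available the equivariance is a routine concatenation of three edges and its reverse, so no further difficulty arises; a secondary point is the existence of mutually inverse $a,a'$ with the stated products, which is pure Green-relation bookkeeping and I would cite or derive from $e\mathrel{\GD}e'$ in a line.
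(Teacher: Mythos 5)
Your proposal is correct and follows essentially the same route as the paper's proof: a conjugation isomorphism $H(e)\to H(e')$ built from elements witnessing the $\GD$-equivalence, an induced bijection between the sets of strongly connected components of fixed points, and a final equivariance (commutative-diagram) check. The only substantive difference is bookkeeping: where the paper works with a four-element ``passing system'' $(a,a',b,b')$ satisfying $eaa'=e$, $bb'e'=e'$, $ea=b'e'$ and leaves the bijectivity of the component correspondence as ``one may verify'', you use the standard normalized mutually inverse pair ($aa'=e$, $a'a=e'$, $ea=a=ae'$, $e'a'=a'=a'e$) from Green's theory, and your relation $\Phi$ (requiring the round trip $s\edge{a}p\edge{a'}s$) makes the well-definedness and invertibility of the component bijection fully explicit, so your write-up is, if anything, tighter than the paper's.
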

\begin{proof}
Let $(a,a',b,b')$ be a passing system from $e$ to $e'$, that is such that
\begin{displaymath}
eaa'=e,\quad bb'e'=e',\quad ea=b'e'.
\end{displaymath}
We will verify that there is a commutative diagram of isomorphisms shown in Figure~\ref{diagram}.

\begin{figure}[hbt]
\centering
\includegraphics{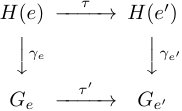}
\caption{Commutative diagram of isomorphisms.}\label{diagram}
\end{figure}

We define the map $\tau$ by $\tau(m)=bma$. Then it is easy to verify
that $\tau$ is a morphism and that $m'\mapsto b'm'a'$ is its inverse.
Thus $\tau$ is an isomorphism.

We define $\tau'$ as follows. Let 
$\Gamma_e,\Gamma_{e'}$ be the
sets of strongly connected components of fixed points of $e$
and $e'$ respectively. Let $\theta$ be the relation between $\Gamma_e$
and $\Gamma_{e'}$ defined by $(\sigma,\sigma')\in\theta$ if
for some $s\in \sigma$ and $s'\in\sigma'$, we have $s\edge{eae'}s'$.
One may verify that $\theta$ is a bijection between $\Gamma_e$ and $\Gamma_{e'}$.
Its inverse is the map on classes induced by $e'be=e'a'e$.
Then $\tau'(n)=\theta^tn\theta$.

We verify that the diagram is commutative. 
Suppose  that for some $m\in H(e)$
$(\sigma'_1,\sigma'_1)\in\tau'(\gamma_e(m))$.
By definition of $\tau'$ there exist $\sigma_1,\sigma_2\in\Gamma_e$
such that
\begin{displaymath}
(\sigma'_1,\sigma_1)\in\theta^t,\quad 
(\sigma_1,\sigma_2)\in \gamma_e(m)
\text{ and }(\sigma_2,\sigma'_2)\in\theta.
\end{displaymath}
Then for $s_1\in \sigma_1$, $s'_1\in\sigma'_1$, $s'_2\in\sigma'_2$
and $s_2\in \sigma_2$, we have
\begin{displaymath}
s'_1\edge{e'be}s_1,\quad
s_1\edge{m}s_2\edge{m^{-1}}s_1. \quad
s_2\edge{eae'}s'_2.
\end{displaymath} Then
$s'_1\edge{bma}s'_2\edge{bm^{-1}a}s'_1$ showing that 
$(\sigma'_1,\sigma'_1)\in\gamma_{e'}(\tau(m))$.
\end{proof}

Note that, contrary to the case of a monoid of unambiguous relations,
two $\GD$-equivalent idempotents need not have the
same number of fixed points, as shown by the following example.

\begin{example}
Let $M$ be the monoid of all relations on $Q=\{1,2\}$.
The two idempotents
\begin{displaymath}
e=\begin{bmatrix}1&0\\0&0\end{bmatrix}
,\quad e'=\begin{bmatrix}1&1\\1&1\end{bmatrix}
\end{displaymath}
are $\GD$-equivalent although the first has one fixed point and the second has two.
\end{example}

Let $M$ be a monoid of relations on a finite set $Q$. The \emph{minimal
rank} of $M$, denoted $r(M)$ is the minimum of the ranks of the
elements of $M$ other than $0$. The following
statement generalizes Theorem 9.3.10 in \cite{BerstelPerrinReutenauer2009} from unambiguous to arbitrary transitive monoids of relations. A $\GD$-class is \emph{regular} if it contains an idempotent. A monoid of relations on $Q$
is \emph{transitive} if for every $p,q\in Q$, there
is an $m\in M$ such that $p\edge{m}q$.

\begin{theorem}\label{theorem9.3.10}
Let $M$ be a transitive monoid of relations on a finite set $Q$.
The set $K$
of elements of rank $r(M)$ is a regular $\GD$-class. The groups $G_e$ for $e$ idempotent in $K$
are equivalent transitive permutation groups. Moreover, for a fixed point $i$ of $e$, the minimal rank $r(M)$ is the index of the subgroup 
$\{m\in H(e)\mid i\edge{m}i\}$ in $H(e)$.
\end{theorem}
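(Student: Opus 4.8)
The plan is to identify $K$ with the kernel (minimal ideal) of the finite monoid $M$ and read off the group and the index formula from the idempotents described by Proposition~\ref{propositionIdempotent}. First I would record that $\operatorname{rank}$ is non-increasing under products: from $m=uv$ with $u$ a $Q\times R$ and $v$ an $R\times Q$ relation one gets $amb=(au)(vb)$, a factorization through the same $R$, so $\operatorname{rank}(amb)\le\operatorname{rank}(m)$. Hence the nonzero elements of rank $r(M)$, together with $0$ when $0\in M$, form an ideal $I$. Next I would produce an idempotent of minimal rank: given $m$ of rank $r(M)$, choose $p\edge{m}q$, use transitivity to get $c$ with $q\edge{c}p$, so $mc$ has the fixed point $p$ and is a non-nilpotent element of rank $r(M)$; its idempotent power $e=(mc)^\omega$ is then nonzero and has rank $r(M)$. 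The crux is that $I$ is a single $\mathcal{J}$-class: taking $m,m'$ of rank $r(M)$, transitivity yields $c$ with $mcm'\neq 0$, and $mcm'\le_{\mathcal{J}}m,m'$ has the same rank; I would upgrade these to $\mathcal{J}$-equalities by a stabilisation argument along $m,amb,a^2mb^2,\dots$, where the rank stays $r(M)$ and finiteness forces the one-sided translations to act invertibly on the relevant row/column space, giving $m\,\GR\,am$ and $m\,\GL\,mb$. Since $I$ then contains the idempotent $e$ it is regular, and as $\mathcal{J}=\GD$ in a finite monoid, $K$ is a regular $\GD$-class.

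For the second assertion, all idempotents of $K$ are $\GD$-equivalent, so the proposition on $\GD$-equivalent idempotents already gives that the groups $G_e$ are equivalent permutation groups. To see that $G_e$ is transitive on the set $\Gamma$ of components, I first note that for $w\in M$ the element $ewe$, when nonzero, lies in $eMe$ and has rank $r(M)$, hence (by the first part, using $eMe\cap J_e=H_e$ in a finite monoid) belongs to $H(e)$. Combining this with Proposition~\ref{defEquiv}, I rule out $e$-edges between distinct components: if $e_{s,t}=1$ for fixed points $s,t$ in distinct components, transitivity gives $w$ with $t\edge{w}s$, whence $ewe\in H(e)$ with $(ewe)_{t,s}=1$, contradicting Proposition~\ref{defEquiv}. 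With distinct components mutually non-adjacent, for any $\sigma,\tau$ and fixed points $s\in\sigma$, $t\in\tau$ transitivity provides $w$ with $s\edge{w}t$, so $n=ewe\in H(e)$ satisfies $s\edge{n}t$ with $e_{s,t}=e_{t,s}=0$, and Proposition~\ref{defEquiv} yields $(\sigma,\tau)\in\gamma_e(n)$; thus $G_e$ is transitive.

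For the third assertion I would first show $r(M)=\Card(\Gamma)$. The column–row decomposition $e=\ell r$ of Proposition~\ref{propositionIdempotent} gives $\operatorname{rank}(e)\le\Card(\Gamma)$; conversely, writing $e=uv$ optimally through a set $R$ with $\Card(R)=r(M)$, each component $\sigma$ has a representative $s$ with $s\edge{u}k_\sigma\edge{v}s$ for some $k_\sigma\in R$, and $\sigma\mapsto k_\sigma$ is injective since $k_\sigma=k_\tau$ would give $s\edge{e}t$ and $t\edge{e}s$, forcing $\sigma=\tau$. Hence $r(M)=\Card(\Gamma)$. Finally I would identify $H_i$ with the stabiliser of $\sigma_i$ in $G_e$: if $i\edge{m}i$ then $i\edge{m^k}i$ for all $k\ge 1$, so with $m^{-1}=m^{\,\mathrm{ord}(m)-1}$ also $i\edge{m^{-1}}i$, and combining with $i\edge{e}t$ obtained from $e=mm^{-1}$ places the target point in $\sigma_i$, whence $\gamma_e(m)$ fixes $\sigma_i$; the converse uses strong connectedness inside $\sigma_i$ and $eme=m$. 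Orbit–stabiliser for the transitive action of $H(e)\cong G_e$ on $\Gamma$ then gives $[H(e):H_i]=\Card(\Gamma)=r(M)$.

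The step I expect to be the main obstacle is the uniqueness of the minimal-rank $\mathcal{J}$-class, that is, upgrading $mcm'\le_{\mathcal{J}}m$ of equal rank to $mcm'\,\mathcal{J}\,m$. Over arbitrary Boolean relations this cannot be read off from abstract finite-semigroup theory alone, since equal rank does not force $\mathcal{J}$-equivalence; it needs the relation-theoretic fact that a rank-preserving one-sided translation acts invertibly on the corresponding row or column space, which is precisely where the passage from unambiguous to arbitrary monoids of relations introduces the technical complications mentioned in the introduction.
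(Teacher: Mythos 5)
Your second and third paragraphs are essentially correct and in fact track the paper's own treatment: the paper proves transitivity of $G_e$ by exactly your argument (nonzero $eme$, $ene$ lie in $H(e)$, then Proposition~\ref{defEquiv} rules out $e$-edges between distinct components and puts $(\sigma,\tau)$ in $\gamma_e(eme)$), and it gets the index formula from orbit--stabilizer; your verifications that $r(M)=\Card(\Gamma)$ and that $\{m\in H(e)\mid i\edge{m}i\}$ is the stabilizer of the component of $i$ are correct and are details the paper leaves implicit. The genuine gap is in your first paragraph, at the step you yourself flag: showing the nonzero minimal-rank elements form a single $\mathcal{J}$-class. Your proposed mechanism fails twice over. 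First, along $m, amb, a^2mb^2,\dots$ the rank does not ``stay $r(M)$'': the sequence can hit $0$ (already $mb\ne 0$ does not prevent $mb^2=0$), and the eventually periodic tail that finiteness gives is never linked back to $m$ itself. Second, and more fundamentally, the ``relation-theoretic fact'' you appeal to is false in the form you need: invertibility of a rank-preserving translation must be witnessed by an element of $M$, and nothing forces that unless transitivity is used again at this precise point. Concretely, on $Q=\{1,2,3\}$ take $m=\{(1,2)\}$, $b=\{(2,3)\}$ and $M=\{1,m,b,mb,0\}$: then $m$, $b$, $mb$ all have rank $1=r(M)$, $mb\le_{\mathcal{J}}m$, and right multiplication by $b$ maps the nonzero rows of $m$ bijectively onto those of $mb$, yet $M(mb)M=\{mb,0\}$, so $m$ and $mb$ are not $\mathcal{J}$-equivalent. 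This $M$ is not transitive, but your upgrading step invokes transitivity nowhere (you use it only once, to get $mcm'\ne 0$), so as written it would apply to this example and prove a false statement. A minor symptom of the same confusion: since $am\le_{\GL}m$ and $mb\le_{\GR}m$, the conclusions should read $m\,\GL\,am$ and $m\,\GR\,mb$, not $m\,\GR\,am$ and $m\,\GL\,mb$.

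For comparison, the paper does not reprove this part at all: it imports it wholesale (``the proof is the same as for the case of an unambiguous monoid of relations'') from \cite[Theorem 9.3.10]{BerstelPerrinReutenauer2009}, adding only the arguments for transitivity of $G_e$ and for the index formula. So your blind attempt agrees with the paper exactly where the paper argues, and its hole sits exactly where the paper cites. To close the hole you would need to prove, with transitivity threaded through the argument itself, statements such as: for an idempotent $e$ of minimal rank, $eMe\setminus\{0\}=H(e)$, and every minimal-rank element of $M$ lies in $MeM$. Equal rank plus $\mathcal{J}$-comparability alone does not suffice, as the example above shows; this is precisely the substance of the completely-$0$-simple structure of the minimal ideal that the paper borrows rather than re-derives.
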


\enlargethispage{1.5\baselineskip}
\begin{proof}
The proof is the same as for the case of an unambiguous monoid 
of relations except for the last statement.
Let $\sigma,\tau$ be two distinct strongly connected components of
fixed points of $e$ and let $s\in \sigma, t\in\tau$. Since $M$ is transitive
there is an $m\in M$ such that $s\edge{m}t$. Then $eme$ is not $0$
and thus $eme\in H(e)$. Similarly, if $n\in M$ is such that
$t\edge{n}s$, then $ene\in H(e)$. This implies by Proposition~\ref{defEquiv}
that
the restriction of $e$ to $\{s,t\}$ is the identity and that
$(\sigma,\tau)\in \gamma_e(eme)$. Thus $G_e$ is transitive.
The last statement follows from the fact that for any transitive
permutation group on a set $S$, the number of elements of $S$ is equal
to the index of the subgroup fixing one of the points of $S$ (Proposition 1.13.2 of \cite{BerstelPerrinReutenauer2009}).
\end{proof}

The \emph{Suschkevitch group} of $M$ is one of the equivalent
groups $G_e$ for $e$ of rank $r(M)$.

\section{Group and degree of a set} \label{sect-group}
Let $\A=(P,i,i)$ and $\B=(Q,j,j)$
be  automata and let $\rho:P\to Q$ be a reduction. For $m=\varphi_\A(w)$,
the relation $n=\varphi_\B(w)$ is well defined. We denote it by
$n=\hat{\rho}(m)$. Then $\hat{\rho}$ is a morphism from $\varphi_\A(A^*)$
onto $\varphi_\B(A^*)$ called the \emph{morphism associated} with $\rho$.
The following result extends Proposition 9.5.1 in \cite{BerstelPerrinReutenauer2009} to arbitrary finite sets of words.

\begin{proposition}\label{proposition9.5.1}
Let $X\subset A^+$ be finite. Let $\A=(P,i,i)$ and $\B=(Q,j,j)$
be trim automata recognizing $X^*$. Let $M=\varphi_\A(A^*)$
and $N=\varphi_\B(A^*)$. Let $E$ be the set of idempotents in $M$
and $F$ the set of idempotents in $N$.

Let $\rho$ be a sharp reduction of $\A$ onto $\B$ and let
$\hat{\rho}:M\to N$ be the morphism associated with $\rho$. Then
\begin{enumerate}
\item $\hat{\rho}(E)=F$.
\item Let $e\in E$ and $f=\hat{\rho}(e)$. The restriction of $\rho$
to the set $S$ of fixed points of $e$ is a bijection on the
set of fixed points of $f$, and the groups $H_e$ and $H_f$ are equivalent.
\end{enumerate}
\end{proposition}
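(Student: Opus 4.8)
The plan is to exploit the sharpness of $\rho$ to transfer idempotents and their fixed-point structure back and forth between $M$ and $N$. First I would establish part (1). Since $\hat\rho$ is a morphism of finite monoids, it maps idempotents to idempotents, so $\hat\rho(E)\subseteq F$ is immediate; the content is surjectivity. Given $f\in F$, pick $w$ with $\varphi_\B(w)=f$. Because $\hat\rho$ is a morphism onto a finite monoid, the sequence $\varphi_\A(w^{n!})$ is eventually idempotent, say equal to some $e\in E$ for large $n$, and $\hat\rho(e)=\varphi_\B(w^{n!})=f^{n!}=f$ since $f$ is idempotent. This gives $\hat\rho(E)=F$.

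The heart of the argument is part (2): showing that $\rho$ restricts to a bijection from the fixed points $S$ of $e$ onto the fixed points $T$ of $f$. The key tool is that $\A$ and $\B$ \emph{both} recognize $X^*$ with multiplicities, so their behaviours agree, and the sharp reduction forces the path structures to match. The plan is to use Proposition~\ref{propositionIdempotent}, which characterizes an idempotent relation entirely through its fixed points and their strongly connected components. A fixed point $p$ of $e$ corresponds to a word $w$ (with $e=\varphi_\A(w)$ up to taking a power) admitting a closed path $p\edge{w}p$; under $\rho$ this maps to a closed path $\rho(p)\edge{w}\rho(p)$ in $\B$, so $\rho(p)$ is a fixed point of $f$, giving a well-defined map $S\to T$. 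For surjectivity onto $T$ I would use condition~\eqref{conditionReduction} defining a reduction: a closed path at a fixed point of $f$ in $\B$ lifts to a path between states of $\A$ mapping to it, and an argument via Proposition~\ref{propositionIdempotent}(1) promotes this to an actual fixed point in $S$.

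The main obstacle, and where sharpness is essential, is \emph{injectivity} of $\rho$ on $S$. The reduction condition alone only guarantees existence of lifted paths, not uniqueness, and indeed without multiplicities injectivity fails (compare the example with $X=\{a^2\}$ in the excerpt). The plan is to argue that if two distinct fixed points $p,p'\in S$ had $\rho(p)=\rho(p')=q$, one could build two distinct paths in $\A$ realizing the same path in $\B$ through $q$, and by routing these paths through the distinguished state $i$ (using $\rho^{-1}(j)=\{i\}$ and the fact that fixed points of an idempotent connect to $i$ via Proposition~\ref{propositionIdempotent}(1)) produce two factorizations in $X^*$ of a single word, contradicting that $\B$ recognizes $X^*$ \emph{with multiplicities} while $\A$ recognizes it with the same multiplicities. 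Once the bijection $\rho|_S:S\to T$ is established, it is compatible with the edges of $e$ and $f$, hence induces a bijection on strongly connected components $\Gamma_e\to\Gamma_f$; I would then check this bijection intertwines $\gamma_e$ and $\gamma_{e'}$ in the sense of the previously proved propositions, yielding that $H_e$ and $H_f$ are equivalent permutation groups via their images $G_e$ and $G_f$.
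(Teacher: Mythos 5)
Your part (1) and the first half of your part (2) --- that $\rho$ maps the fixed points $S$ of $e$ into, and onto, the fixed points $T$ of $f$ --- follow the paper's proof closely. The genuine gap is your injectivity step, and it cannot be repaired, because the state-level claim you set out to prove is false. Take $X=\{a^2,a^3\}$, let $\A$ be its flower automaton, with states $\omega,(a,a),(a,a^2),(a^2,a)$, and $\B$ its prefix automaton, with states $1,a,a^2$; the sharp reduction of Proposition~\ref{proposition4.2.5} sends $\omega\mapsto 1$, $(a,a)\mapsto a$, $(a,a^2)\mapsto a$, $(a^2,a)\mapsto a^2$. The relation $e=\varphi_\A(a^6)$ is the full relation on the four states (the automaton has cycles of lengths $2$ and $3$ through every state), hence an idempotent all of whose states are fixed points; the reduction collapses the two distinct fixed points $(a,a)$ and $(a,a^2)$. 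What is true, and what the paper actually proves, is a bijection at the level of strongly connected components $\Gamma_e\to\Gamma_f$; this is all that is needed, since the permutation groups $G_e$ and $G_f$ act on components, not on states. (The wording of the proposition is admittedly loose here, but its proof and its later uses are at the component level.)

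Independently of the counterexample, your proposed mechanism for a contradiction is fallacious: producing ``two factorizations in $X^*$ of a single word'' contradicts nothing, because $X$ is not assumed to be a code. Recognition with multiplicities only asserts that the number of paths $i\edge{u}i$ in $\A$ equals the number of paths $j\edge{u}j$ in $\B$ (both equal the number of $X$-factorizations of $u$); two distinct paths of $\A$ projecting under $\rho$ to the same path of $\B$ is perfectly consistent with this equality, since the induced map on paths need be neither injective nor surjective. Where sharpness genuinely enters in the paper is in the component-level injectivity: if $\rho(s),\rho(s')$ lie in the same component of $f$, one writes $w=uv=u'v'$ with $s\edge{u}i\edge{v}s$ and $s'\edge{u'}i\edge{v'}s'$ (cycles at fixed points must pass through $i$ because $X$ is finite), observes the path $j\edge{v}\rho(s)\edge{w}\rho(s')\edge{u'}j$ in $\B$, concludes $vwu'\in X^*$ and hence $i\edge{vwu'}i$ in $\A$, and concatenates to get $s\edge{uvwu'v'}s'$, that is $s\edge{e}s'$ since $uvwu'v'=w^3$ and $e^3=e$. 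Finally, your surjectivity sketch for $S\to T$ is also under-specified --- the fixed point produced by lifting a closed path and applying Proposition~\ref{propositionIdempotent} must still lie \emph{over} $t$, which requires an argument --- but that step at least aims at a true statement; the injectivity claim does not.
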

\begin{proof}
1. Let $e\in E$. Then $\hat{\rho}(e)$ is idempotent since $\hat{\rho}$
is a morphism. Thus $\hat{\rho}(E)\subset F$. Conversely, if $f\in F$, let $w\in A^*$ be such that
$\varphi_\B(w)=f$. Let $n\ge 1$ be such that $e=\varphi_\A(w)^n$ is
idempotent. Then $\hat{\rho}(e)=f$ since $\hat{\rho}\circ\varphi_\A=\varphi_\B$.

2. Let $S$ be the set of fixed points of $e$ and $T$ the set of fixed points of $f$. Consider $s\in S$ and let $t=\rho(s)$. From $s\edge{e}s$,
we obtain $t\edge{f}t$ and thus $\rho(S)\subset T$. Conversely, let $t\in T$.
The restriction of $e$ to the set $R=\rho^{-1}(t)$ is a non zero idempotent.
Thus there is some $s\in R$ which is a fixed point of this idempotent,
ans thus of $e$. Thus $t\in \rho(S)$.

Since $\hat{\rho}$ is a morphism from $M$ onto $N$,
we have $\hat{\rho}(H(e))=H(f)$.
It is clear that $\rho$ maps a strongly connected component
of $e$ on a strongly connected component of $f$. To show that
this map is a bijection, consider $s,s'\in S$ such that
$\rho(s),\rho(s')$ belong to the same connected component.
We may assume that $e$ is not the equality relation.
Let $w\in A^+$ be such that $\varphi_\A(w)=e$. Since
$X$ is finite, there are factorizations $w=uv=u'v'$
such that $s\edge{u}i\edge{v}s$ and $s'\edge{u'}i\edge{v'}s'$.
Then we have $j\edge{v}\rho(s)\edge{w}\rho(s')\edge{u'}j$.
Since $\rho$ is sharp, this implies $i\edge{vwu'}i$
and finally $s\edge{uvwu'v'}s'$. This shows
that $s\edge{e}s'$. A similar proof shows that
$s'\edge{e}s$. Thus,  $s,s'$
belong to the same connected component of $e$.

Moreover,
for every $m\in H(e)$, one has $s\edge{m}t\edge{m^{-1}}s$
if and only if 
$\rho(s)\edge{\hat{\rho}(m)}\rho(t)\edge{\hat{\rho}(m^{-1})} \rho(s)$.
Thus $H_e$ and $H_f$ are equivalent permutation groups.
\end{proof}

Let $X\subset A^+$ be a finite set and let $\A$ be the flower automaton of $X$.
The \emph{degree} of $X$, denoted $d(X)$ is the minimal rank of the monoid
$M=\varphi_\A(A^*)$.
The \emph{group of} $X$ is the Suschkevitch group of $M$.
Proposition~\ref{proposition9.5.1} shows that the definitions
of the group and of the degree do not depend on the automaton
chosen to recognize $X^*$, provided one takes a trim
automaton recognizing $X^*$ with multiplicities.

\section{Synchronization} \label{sect-synchronization}

Let $X\subset A^+$ be a finite set of words.
A  word $x\in A^*$ is \emph{synchronizing} for $X$ if
for every $u,v\in A^*$, $uxv\in X^*\Rightarrow ux,xv\in X^*$.
A set $X$ is \emph{synchronizing} if there is a synchronizing word $x\in X^*$. The next proposition generalizes Proposition 10.1.11 of \cite{BerstelPerrinReutenauer2009}

\begin{proposition}\label{prop-degree-one}
A finite set $X \subset A^+$ is synchronizing if and only if
its degree $d(X)$ is~$1$.
\end{proposition}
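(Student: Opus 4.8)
The plan is to carry the whole argument inside the flower automaton $\A=(P,\omega,\omega)$ of $X$, which recognizes $X^*$ and is trim and strongly connected (each state $(u,v)$ satisfies $\omega\edge{u}(u,v)$ and $(u,v)\edge{v}\omega$). Writing $M=\varphi_\A(A^*)$ for its Boolean transition monoid, we have $d(X)=r(M)$ by definition, so the statement becomes $r(M)=1\iff X$ synchronized. The dictionary I will use is that a nonzero relation $\varphi_\A(w)$ has rank $1$ precisely when it is a rectangle $L\times R$, and that when $w\in X^*$ one automatically has $\omega\in L\cap R$ (because $\omega\edge{w}\omega$). I also record first that powers of a synchronizing word stay synchronizing: from $ux^2v\in X^*$, writing $ux^2v=ux\cdot xv$ and applying the hypothesis to the two displayed occurrences of $x$ gives $ux^2,x^2v\in X^*$, so $x^n$ is synchronizing for all $n$.

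For $d(X)=1\Rightarrow X$ synchronized, I would start from a nonzero $m=\varphi_\A(w)$ of rank $1$, that is $m=L\times R$ with $L,R\neq\emptyset$. Picking $p_0\in L$, $q_0\in R$ and, by strong connectedness, words $a,b$ with $\omega\edge{a}p_0$ and $q_0\edge{b}\omega$, the word $w'=awb$ lies in $X^*$ and $\varphi_\A(w')=\varphi_\A(a)\,m\,\varphi_\A(b)$ is nonzero of rank $\le1$, hence a rectangle $L'\times R'$ with $\omega\in L'\cap R'$. I then check that $w'$ is synchronizing: if $uw'v\in X^*$, decompose $\omega\edge{u}p\edge{w'}q\edge{v}\omega$; then $(p,q)\in L'\times R'$, and since $\omega\in R'$ and $\omega\in L'$ we get $p\edge{w'}\omega$ and $\omega\edge{w'}q$, whence $uw'\in X^*$ and $w'v\in X^*$.

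For the converse, take a synchronizing $x\in X^*$ and choose $n$ so that $g=x^n$ has $e=\varphi_\A(g)$ idempotent; then $g\in X^*$ is still synchronizing and $\omega$ is a fixed point of $e$. By Proposition~\ref{propositionIdempotent} it suffices to show that $e$ has a single strongly connected component of fixed points, for then the column–row decomposition $e=\ell r$ forces $r(M)\le\mathrm{rank}(e)=1$. Let $t=(u_0,v_0)\neq\omega$ be a fixed point; reading $g$ from $t$ must first traverse $v_0$ to reach $\omega$ and return to $t$ by a final $u_0$, so $g=v_0hu_0$ with $h\in X^*$. Applying synchronization to $u_0gv_0=(u_0v_0)h(u_0v_0)\in X^*$ yields $u_0g\in X^*$ and $gv_0\in X^*$; hence $gv_0h=(gv_0)h\in X^*$ and $hu_0g=h(u_0g)\in X^*$. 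Since $g^2=(gv_0h)u_0=v_0(hu_0g)$, these two memberships give $\omega\edge{g^2}t$ and $t\edge{g^2}\omega$, and idempotency $\varphi_\A(g^2)=e$ turns them into $(\omega,t)\in e$ and $(t,\omega)\in e$, placing $t$ in the component of $\omega$. As $t$ is arbitrary, all fixed points lie in one component, so $d(X)=r(M)=1$.

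The delicate step is this last one. In the free (code) case the identities $u_0g,gv_0\in X^*$ collapse by cancellation in the free monoid $X^*$ to $hu_0,v_0h\in X^*$, giving $t\edge{g}\omega$ and $\omega\edge{g}t$ at once; for an arbitrary finite $X$ the monoid $X^*$ is not cancellative, and I expect the main obstacle to be replacing this cancellation. The device above is to pass to the idempotent power $g=x^n$, read the available memberships as $g^2$-paths rather than $g$-paths, and pull them back along $\varphi_\A(g^2)=\varphi_\A(g)$. The remaining ingredients — that a fixed point of $e$ forces the factorization $g=v_0hu_0$, and that powers of a synchronizing word remain synchronizing — are routine verifications on the flower automaton that support this argument.
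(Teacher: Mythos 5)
Your proof is correct, and its second half takes a genuinely different route from the paper's. The direct implication is essentially the paper's argument: a nonzero rank-one relation is a rectangle $L'\times R'$ containing $\omega$ on both sides, which is exactly what synchronization asks for. The one thing you add is the conjugation $w'=awb$, justified by strong connectedness of the flower automaton, to produce a rank-one element that is the image of a word of $X^*$; the paper simply writes ``let $x\in X^*$ be such that $\varphi_\A(x)$ has rank $1$'' and leaves this step implicit, so your version is slightly more complete. The converse is where you diverge. The paper argues through its group machinery: from a synchronizing $x$ with $\varphi_\A(x)=e$ idempotent it shows $H(e)\subset\varphi_\A(X^*)$, hence that every element of $H(e)$ fixes the state $\omega$, and then cites Theorem~\ref{theorem9.3.10}, whose last assertion identifies $r(M)$ with the index in $H(e)$ of the stabilizer $\{m\in H(e)\mid i\edge{m}i\}$. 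You instead prove outright that $e$ itself has rank one: a fixed point $t=(u_0,v_0)\ne\omega$ forces $g=v_0hu_0$, synchronization extracts $gv_0h,\,hu_0g\in X^*$, and reading these as $g^2$-paths together with $\varphi_\A(g^2)=e$ places $t$ in the strongly connected component of $\omega$, so the column--row decomposition of Proposition~\ref{propositionIdempotent} gives $\mathrm{rank}(e)=1$. Your route is more elementary and self-contained, and it quietly avoids a point the paper glosses over: the index formula of Theorem~\ref{theorem9.3.10} is stated only for idempotents of \emph{minimal} rank, whereas the idempotent obtained from a synchronizing word is not known beforehand to have minimal rank; your argument never needs that theorem. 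What the paper's route buys in exchange is the connection between synchronization and the group $G(X)$, in keeping with the rest of the paper.

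One small repair is needed: take the synchronizing word $x$ to be nonempty. This is harmless --- if the empty word is synchronizing, then applying its definition at the two cut points of $uxv$ shows every word is synchronizing, so you may take $x\in X$ --- but without it $g=x^n$ could be empty, $e$ the identity relation, and the factorization $g=v_0hu_0$, which requires the cycle at $t$ to pass through $\omega$, is unavailable.
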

\begin{proof}
Let $\A=(Q,i,i)$ be a trim automaton recognizing $X^*$.

Assume first that $d(X)=1$. Let $x\in X^*$ be such that
$\varphi_\A(x)$ has rank $1$. If $uxv\in X^*$, we have
$i\edge{u}p\edge{x}q\edge{v}i$ for some $p,q\in Q$.
Since $\varphi_\A(x)$ has rank $1$, we deduce from
$i\edge{x}i$ and $p\edge{x}q$ that we have also
$i\edge{x}q$ and $p\edge{x}i$. Thus $ux,xv\in X^*$, showing
that $x$ is synchronizing.

Assume conversely that $X$ is synchronizing. Let $x\in X^*$ be a synchronizing
word. Replacing $x$ by some its power, we may assume that $\varphi_\A(x)$
is an idempotent $e$.
Let $m\in H(e)$ and let $w\in\varphi_\A^{-1}(m)$. Since $H(e)$ is finite, there is some $n\ge 1$
such that $m^n=e$. Then $(me)^{n}=e$ implies
that $(wx)^{n}\in X^*$. Since $x$ is synchronizing, we obtain
$wx\in X^*$ and since $\varphi_\A(wx)=me=m$, this implies $w\in X^*$.
This shows that $H(e)$ is contained in $\varphi_\A(X^*)$
and thus that $d(X)=1$ by Theorem~\ref{theorem9.3.10}.	
\end{proof}

\begin{example}\label{exampleaba}
Consider again $X=\{a,ab,ba\}$ (Example~\ref{example{a,ab,ba}}).
The flower automaton of $X$ is represented again for convenience
in Figure~\ref{figureFlower3} (left).

The minimal rank of the elements of $\varphi_\A(A^*)$
is $1$. Indeed, we have
\begin{displaymath}
\varphi_\A(a^2)=\begin{bmatrix}1&1&0\\0&0&0\\1&1&0\end{bmatrix}
=\begin{bmatrix}1\\0\\1\end{bmatrix}\begin{bmatrix}1&1&0\end{bmatrix}
\end{displaymath}
Accordingly, $aa$ is a synchronizing word.

\begin{figure}[hbt]
\centering
\includegraphics{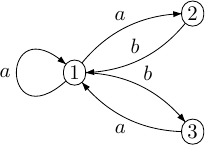}
\hspace{2cm}
\includegraphics{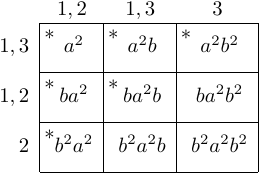}
\caption{The flower automaton of $X$ (left) and the set $K$ of elements of rank $1$ (right).}\label{figureFlower3}
\end{figure}

The set $K$ of elements of rank $1$ is represented in Figure~\ref{figureFlower3} (right).
For each $\GH$-class, we indicate on its left the set of states $p$
such that the row of index $p$ in nonzero. Similarly, we indicate
above it the set of states $q$ such that the column of index $q$ is nonzero.
A star $*$ indicates an $\GH$-class which is a group. Note that
\begin{displaymath}
\varphi_\A(a^2b)=\begin{bmatrix}1&0&1\\0&0&0\\1&0&1\end{bmatrix}
\end{displaymath}
has two fixed points but only one strongly connected class, in agreement
with fact that it is of rank $1$.
\end{example}

\section{Groups and composition} \label{sect-groups-composition}
Given a transitive 
permutation group $G$ on a set $Q$, an \emph{imprimitivity relation} of $G$
is an equivalence on $Q$ compatible with the group action.
If $\theta$ is such an equivalence relation, we denote by
$G_\theta$ the permutation group induced by the action of $G$
on the classes of $\theta$. The groups induced by the action
on the class of an element $i\in Q$ by the action of the elements of $G$ stabilizing the class of $i$ are all equivalent. We denote by $G^\theta$ one of them.
For two permutation groups $G,H$ on sets $P$ and $Q$ respectively,
we denote $G\le H$ if there is an imprimitivity equivalence $\theta$
on $Q$ such that $G=H_\theta$.

The next theorem generalizes Proposition 11.1.2 of \cite{BerstelPerrinReutenauer2009}. 

\begin{theorem}\label{theoremDegrees}
Let $X\subset A^+$ be a finite set with a trim decomposition $X=Y\circ Z$,
where $Y$ is complete. There exists an imprimitivity equivalence
$\theta$ of $G=G(X)$ such that
\begin{displaymath}
G^\theta\le G(Y),\quad G _\theta=G(Z).
\end{displaymath}
In particular, $d(X)\le d(Y) \cdot d(Z)$.
\end{theorem}
\begin{proof}
Let  $\B=(Q,i,i)$ be the flower automaton of $Y$
and let $\Tr$ be the prefix transducer of $Z$.
Let $\A=\B\circ \Tr$.
By Proposition~\ref{propositionRedComp}, there is a reduction
$\rho$ from $\A=(Q\times P,(i,1),(i,1))$ onto the prefix automaton $\mathcal C$ of $Z$.

Let $e$ be an idempotent of minimal rank  in $\varphi_\A(X^*)$.
Let $S$ be the set of fixed points of $e$ and
let $\Gamma$ be the set of connected components (scc) of the elements of $S$. 
Let $\hat{S}$ be the set of fixed points of $\hat{e}=\hat{\rho}(e)$
and let $\hat{\Gamma}$ be the set of corresponding scc's. If $s,s'\in S$ are in the same scc, then $\rho(s),\rho(s')$ are in the same scc
of $\hat{S}$. Thus, we have a well-defined
map $\bar{\rho}:\Gamma\to\hat{\Gamma}$ such that $s\in\Gamma$
if and only if $\rho(s)\in \bar{\rho}(\Gamma)$.

We define an equivalence
$\theta$ on $\Gamma$ by $\sigma\equiv \sigma'$ if 
$\bar{\rho}(\sigma)=\bar{\rho}(\sigma')$. Let $m\in H(e)$ and suppose that
$(\sigma,\tau),(\sigma',\tau')\in \gamma_e(m)$.
If $\sigma\equiv\sigma'\bmod\theta$,
then $\tau\equiv\tau'\bmod\theta$. Let indeed
$s\in \sigma,s'\in\sigma'$ and $t\in\tau,t'\in\tau'$.
We have by definition of $\gamma_e$
\begin{displaymath}
s\edge{m}t\edge{m^{-1}}s
\text{ and }
s'\edge{m}t'\edge{m^{-1}}s'
\end{displaymath}
and thus
\begin{displaymath}
\rho(s)\edge{\hat{\rho}(m)}\rho(t)\edge{\hat{\rho}(m)^{-1}}\rho(s)
\text{ and }
\rho(s')\edge{\hat{\rho}(m)}\rho(t')\edge{\hat{\rho}(m)^{-1}}\rho(s')
\end{displaymath}
This implies that $\rho(t)\edge{\hat{e}}\rho(t')$
and  $\rho(t')\edge{\hat{e}}\rho(t)$. But since $\gamma_{\hat{e}}(\hat{m})$
is a permutation, this forces $\bar{\rho}(\tau)=\bar{\rho}(\tau')$ and finally
$\tau\equiv\tau'\bmod\theta$. 
Since the action of $H(e)$ on the classes of $\theta$ is
the same as the action of $H(\hat{e})$, we have $G(Z)=G_\theta$.

Finally, let $\sigma\in\Gamma$ be the class of the initial state $(i,1)$
and let $I$ be its class $\bmod\theta$. Thus $d(X)=\Card(I)d(Z)$.
Let $x\in X^*$ be such that
$\varphi_\A(x)=e$ and let $y=\beta^{-1}(x)$. Then $f=\varphi_\B(y)$
is an idempotent of $\varphi_\B(B^*)$ of rank $d(Y)$.
Let $U$ be the set of fixed points of $f$ and let $\Phi$
be the set of scc of $U$ for the action of $f$.
Let $\sigma$ be the equivalence on $\Phi$ induced by the equivalence
$r\equiv s$ if $(r,1),(s,1)$ belong to the same scc for $e$.
Then $\sigma$ is an imprimitivity equivalence for $G(Y)$
such that $G(Y)_\sigma=G^\theta$. Thus
$G^\theta\le G(Y)$ and $\Card(I)\le d(Y)$,
which implies $d(X)\le d(Y) \cdot d(Z)$.
\end{proof}

\enlargethispage{-1\baselineskip}
\begin{example}
  Let $Z=\{a,ab,ba,ca\}$ and $X=Z^2$. We have $X=Y\circ_\beta Z$
  with $Y=\{u,v,w,x\}^2$ and
  $\beta:u\mapsto a, v\mapsto ab, w\mapsto bc,x\mapsto ca$.
  The word $aa$ is synchronizing for $Z$ and thus $d(Z)=1$.
  In contrast, we have $d(Y)$ and $G(Y)=\Z/2\Z$.
  It can be verified that the word $ca^2b$ is synchronizing
  for $X$ and thus $d(X)=1$. Thus $d(X)<d(Y)\cdot d(Z)=2\cdot 1=2$.
  Thus the case of a stict inequality can occur. This
  is made possible by the fact that $Z$ is not a code.
  Indeed, we have $(ab)(ca)=a(bc)a$.
\end{example}

\section{Decompositions of codes} \label{sect-codes}

Finally, we use the developed techniques to show that for a uniquely decipherable code $X$ for all the trim decompositions of the form $X = Y \circ Z$ with $Y$ is complete we have that $Z$ (and thus $Y$) is a uniquely decipherable code as well. It shows that, as long as we require $Y$ to be complete, we do not get any new trim decompositions of uniquely decipherable codes even if we decompose them as arbitrary sets of words.

\begin{proposition}\label{propositionCompleteDecomp}
Let $X=Y\circ Z$ be a trim decomposition of a finite set $X$. If $X$
is a uniquely decipherable code and if $Y$
is complete, then $Z$ is a uniquely decipherable code.
\end{proposition}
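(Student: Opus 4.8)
The plan is to prove the contrapositive, by reducing both hypotheses and the conclusion to statements about the coding morphism $\beta\colon B^*\to A^*$ and then using completeness to transport an ambiguity of $Z$ into $Y^*$.

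First I would reformulate everything through $\beta$. The set $Z$ is a code if and only if $\beta$ is injective, since the $Z$-factorizations of a word $v$ are exactly its preimages in $B^*$. Next I record the easy implication: if $X$ is a code, then $\beta$ is injective on $Y^*$. Indeed, given $\eta=y_1\cdots y_k$ and $\eta'=y'_1\cdots y'_l$ in $Y^*$ with $\beta(\eta)=\beta(\eta')$, the products $\beta(y_1)\cdots\beta(y_k)$ and $\beta(y'_1)\cdots\beta(y'_l)$ are two $X$-factorizations of one word; as $X$ is a code they coincide, and trimness (injectivity of $\beta$ on $Y$) forces $y_i=y'_i$, so $\eta=\eta'$. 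Thus it suffices to prove the contrapositive: if $Z$ is not a code (so $\beta$ is not injective on $B^*$) and $Y$ is complete, then $\beta$ is not injective on $Y^*$; any two distinct $\eta,\eta'\in Y^*$ with $\beta(\eta)=\beta(\eta')$ then give two distinct $X$-factorizations of a single word, contradicting that $X$ is a code.

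The natural framework is that of Proposition~\ref{propositionRedComp}. Let $\B$ be the flower (or prefix) automaton of $Y$, let $\Tr$ be the prefix transducer of $Z$, and let $\A=\B\circ\Tr$, which recognizes $X^*$ with multiplicities, so that $X$ is a code exactly when $\A$ is unambiguous. Completeness of $Y$ amounts to $0\notin\varphi_\B(B^*)$, i.e. every word labels a path in $\B$, and it is precisely what yields the reduction $\rho$ onto the prefix automaton $\mathcal C$ of $Z$. A successful path of $\A$ labeled $v$ is exactly a $Z$-factorization of $v$ whose associated word over $B$ lies in $Y^*$ (recall $\varphi_\B(w)_{1,1}=1$ iff $w\in Y^*$). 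Hence the whole problem is to manufacture, from a single relation $\beta(s)=\beta(s')$ with $s\neq s'$ in $B^*$, one word carrying two distinct $Z$-factorizations whose $B$-words both lie in $Y^*$, equivalently two elements $\eta\neq\eta'$ of $Y^*$ with $\beta(\eta)=\beta(\eta')$.

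I expect this last step to be the main obstacle, and completeness is essential (the single non-complete word $Y=\{y\}$ already makes $X$ a code over a non-code $Z$). The given relation need not admit a common context $g,d$ with $gsd,gs'd\in Y^*$, so I would first pass to the balanced relation $ss'\neq s's$ with $\beta(ss')=\beta(s's)=\beta(s)^2$; here $ss'\neq s's$ because two words with equal nontrivial $\beta$-image cannot be powers of a common word. Its two sides are now conjugate, of equal length and content. I would then invoke completeness together with an idempotent $e=\varphi_\B(f)$ of minimal rank based at the state $1$ (it exists because $\varphi_\B(B^*)$ has no zero, is transitive, and fixes $1$, so $f$ may be taken in $Y^*$) to route both conjugates through a common loop at $1$. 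The delicate point, which I regard as the heart of the proof, is exactly that completeness lets one choose the derived relation and the idempotent so that the $\B$-supports of its two sides meet at a state reachable from and co-reachable to $1$; this overlap is what produces the required $\eta\neq\eta'$ in $Y^*$, and hence the contradiction with $X$ being a code.
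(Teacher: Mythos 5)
Your reduction is correct and matches what the paper needs: since the decomposition is trim and $X$ is a code, $\beta$ is injective on $Y^*$, so it suffices to produce two distinct words $\eta,\eta'\in Y^*$ with $\beta(\eta)=\beta(\eta')$ from an ambiguity of $Z$. The problem is that this production step---which you yourself call ``the heart of the proof''---is never carried out. Passing to the balanced relation $ss'\ne s's$, picking ``an idempotent of minimal rank based at the state $1$'', and asserting that ``completeness lets one choose the derived relation and the idempotent so that the $\B$-supports of its two sides meet at a state reachable from and co-reachable to $1$'' are declarations of intent: the claim that both sides can be routed through a common loop at $1$ is exactly the statement to be proved, and no argument is given for it. So the proposal has a genuine gap at its central point; moreover the balancing step buys nothing, since the same difficulty reappears unchanged for the pair $ss'$, $s's$.

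For comparison, here is how the paper closes that gap, with a short argument in the transition monoid $M=\varphi_\A(B^*)$ of the flower automaton $\A$ of $Y$ over the alphabet $B$ (no wreath product is needed). Choose $y\in Y^*$ such that $\varphi_\A(y)$ has minimal rank; such a $y$ exists because $Y$ is complete: any word $w$ achieving the minimal rank is a factor of some $gwd\in Y^*$, and $\varphi_\A(gwd)$ is nonzero of rank at most that of $\varphi_\A(w)$. If $u\ne v$ are two $\beta$-preimages of the same word of $Z^*$, completeness again gives $\varphi_\A(yuy)\ne 0$ and $\varphi_\A(yvy)\ne 0$; being nonzero, of minimal rank, and lying in $\varphi_\A(y)M\varphi_\A(y)$, these elements belong to the $\GH$-class $H(\varphi_\A(y))$, which is a finite group whose idempotent is $e=\varphi_\A(y)^n$ for some $n\ge 1$. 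Hence $\varphi_\A\bigl((yuy)^m\bigr)=\varphi_\A\bigl((yvy)^p\bigr)=e$ for suitable $m,p\ge 1$. Since $y^n\in Y^*$, the idempotent $e$ fixes the state $\omega$, so $(yuy)^m$ and $(yvy)^p$ label loops at $\omega$, that is, they lie in $Y^*$. Then $\eta=(yuy)^{mp}$ and $\eta'=(yvy)^{pm}$ are distinct words of $Y^*$ with $\beta(\eta)=\beta(\eta')=\bigl(\beta(y)\beta(u)\beta(y)\bigr)^{mp}$, and your own reduction yields the contradiction. The two ideas missing from your sketch are precisely this sandwiching of $u$ and $v$ by a minimal-rank word of $Y^*$, and the use of the finiteness of the group $H(e)$ to turn the sandwiched elements into loops at $\omega$ by raising them to suitable powers.
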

\begin{proof}
Since $\beta$ is trim, $Y$ is a uniquely decipherable code.
Let $\beta:B\rightarrow Z$ be the coding morphism for $Z$
such that $X=Y\circ_\beta Z$.
Assume that $z\in Z^*$ is a word with more than one factorization
into words of $Z$. Let $u,v\in B^*$ two distinct
elements   in $\beta^{-1}(z)$. Let $\A$ be the flower automaton of $Y$.
Let $y\in Y^*$ be such that $\varphi_\A(y)$ has minimal rank. Then
$yuy,yvy$ are not zero since $Y$ is complete. Thus $\varphi_\A(yuy),\varphi_\A(yvy)$
belong to the $\GH$-class of $\varphi_\A(y)$ which is a finite
group. Let $e$ be its idempotent. There
are integers $n,m,p$ such that $\varphi_\A(y)^n=\varphi_\A(yuy)^m
=\varphi_\A(yvy)^p=e$.
Since $y\in Y^*$, this implies that $e\in\varphi_\A(Y^*)$
and thus that $(yuy)^m,(yvy)^p$ are in $Y^*$.
We conclude that $Y$ is not a uniquely decipherable code, a contradiction.
\end{proof}

This is false if we do not require $Y$ to be complete. Consider a code $X = \{ab, abaab, abbab\}$, which can be decomposed into $X = Y \circ Z$ with $Y = \{u, uvu, uwu\}$ and $Z = \{ab, a, b\}$. The decomposition is obviously trim, the set $X$ is a uniquely decipherable code, but the set $Z$ is not a uniquely decipherable code.

\bibliography{finGenSubRev}

\end{document}